\documentclass[11pt]{article}

\usepackage{amssymb, amsmath, amsthm, amsfonts, pdfsync}
\usepackage{microtype, graphicx, subfigure}
\usepackage{hyperref}
\usepackage{macros}
\usepackage{fullpage}


\makeatletter
\renewcommand*{\@fnsymbol}[1]{\ensuremath{
  \ifcase#1\or * \or \natural \else\@ctrerr\fi}}
\makeatother

\newcommand{\gsim}{\stackrel{c}{>}}
\newcommand{\lsim}{\stackrel{c}{<}}
\newcommand{\ds}{\gamma_{s}}
\newcommand{\lips}{\mathcal{L}}
\newcommand{\Lw}{\lips_w}
\newcommand{\Lu}{\lips_u}

\title{The Well Tempered Lasso }
\author{
  Yuanzhi Li\thanks{
    Princeton University. Email: \textrm{yuanzhil@cs.princeton.edu}}
  \and
  Yoram Singer\thanks{
    Google Brain \& Princeton University.
    Email: \textrm{singer@google.come}}
}

\begin{document}

\maketitle

\begin{abstract}
We study the complexity of the entire regularization path for least squares
regression with $1$-norm penalty, known as the Lasso. Every regression
parameter in the Lasso changes linearly as a function of the regularization
value. The number of changes is regarded as the Lasso's complexity.
Experimental results using exact path following exhibit
polynomial complexity of the Lasso in the problem size. Alas, the path
complexity of the Lasso on artificially designed regression problems is
\emph{exponential}.

We use \emph{smoothed analysis} as a mechanism for bridging the gap
between worst case settings and the de facto low complexity. Our analysis
assumes that the observed data has a \emph{tiny} amount of intrinsic noise.
We then prove that the Lasso's complexity is polynomial in the problem size.
While building upon the seminal work of Spielman and Teng on smoothed
complexity, our analysis is morally different as it is \emph{divorced} from
specific path following algorithms. We verify the validity of our analysis in
experiments with both worst case settings and real datasets. The empirical
results we obtain closely match our analysis.
\end{abstract}

\section{Introduction}
In high dimensional learning problems, a sparse solution is often desired as
it has better generalization and interpretation. In order to promote sparse
solutions, a regularization term that penalizes for the $1$-norm of the vector
of parameters is often augmented to an empirical loss term. In regression
problems, the empirical loss amounts to sum of the squared differences between
the linear predictions and true targets. The task of linear regression with
$1$-norm penalty is known as Lasso~\cite{tibshirani1996regression}.  To obtain
meaningful solutions for the Lasso, it is required to pick a good value of the
$\ell_1$ regularizer.  To automatically choose the best regularization value,
algorithms that calculate all possible solutions were
developed~\cite{efron2004least, osborne2000new, tibshirani2012}.

These algorithms find the solution set for all possible regularization
values, commonly referred to as the entire regularization path. The
algorithms typically built upon the property that the Lasso regularization
path is piecewise linear in the constituents of solution vector. As a
result, their running times are also governed by the total number of linear
segments. While experiments with real datasets suggest that the number of
linear segments is in practice linear in the dimension of the
problem~\cite{rosset2007piecewise}, worst case
settings~\cite{mairal2012complexity} can yield to exponentially many linear
segments. The construction of exponentially complex regression problems
of~\cite{mairal2012complexity} stands in stark contrast to the
aforementioned methods. Provably polynomial path complexity has so far
derived in vastly more restricted settings, such as the one described
in~\cite{dubiner2013maxent}.

We bridge the gap between the de facto complexity of the regularization path
in real problems and the worst case analysis of the number of linear
segments.  We show that under fairly general models, the complexity of the
entire regularization path is guaranteed to be polynomial in the dimension
of the problem. As an important observation, settings which attain the worst
case complexity of the regularization path often exhibit fragile algebraic
structure. In contrast, natural datasets often comes with noise, which
renders those highly frail structures improbable.  This approach is called
the \emph{smoothed analysis}, introduced by the seminal paper
of~\cite{spielman2009smoothed}.

The core of smoothed analysis is the assumption that the input data is
subjected to a small \emph{intrinsic} noise. Such noise may come from
uncertainty in the physical measurements when collecting the data, irrational
decisions in human feedback, or simply the rounding errors in the computation
process used for obtaining the data.  In this model, we let
$\bX\in\mathbb{R}^{n\times{}d}$ denote the data matrix where $n$ is the number
of observations is $d$ is the dimension (number of free parameters).  Smoothed
analysis assumptions implies that $\bX$ is the sum of $\bhX$, an unknown
fixed matrix, and $\bG$, which consists of i.i.d. random samples from a normal
distribution with a zero mean and low variance, $\bX=\bhX+\bG$. In this view,
the data is neither \emph{completely random} nor \emph{completely arbitrary}.
The \emph{smoothed complexity} of the problem is measured as the expected
complexity taken over the random choices of $\bG$. Using this framework, it
was proved that the smoothed running time of the simplex, $k$-means, and the
Perceptron algorithm~\cite{spielman2001smoothed, arthur2009k,
blum2002smoothed} is in fact polynomial while the worst case complexity of
these problems is exponential.

We use the above smoothed analysis model to show that on ``typical''
instances, the total number of linear segments of the Lasso's \emph{exact}
regularization path is polynomial in the problem size with high probability.
Informally speaking and omitting technical details, our main result can be
stated as follows.
\begin{center}
\fbox{\parbox{0.86\textwidth}{
Let $\bX\in\mathbb{R}^{n \times d}$ be a data matrix of the form
$\bX = \bhX + \bG$ for \emph{any} fixed matrix $\bhX$ and a
random Gaussian matrix $\bG$, $\bG_{ij} \sim \mathcal{N}(0, \sigma^2)$.
Then, for an \emph{arbitrary} vector of targets $y \in \mathbb{R}^n$,
with high probability, the total number of linear segments of the
Lasso's \emph{exact} regularization path for $(\bX, y)$ is
\emph{polynomial} in $n$, $d$, and
$\frac1\sigma$.}}
\end{center}

Our result is conceptually different than the one presented
in~\cite{mairal2012complexity}. Mairal~and~Yu showed that
there exists an \emph{approximate} regularization path with a small number of
linear segments. However, the analysis, while being novel and inspiring, does
not shed light on why, in practice, the \emph{exact} number of linear segments
is small as the \emph{approximated} path is unlikely to coincide with the
\emph{exact} path. Our analysis covers uncharted terrain and different aspects
than the approximated path algorithms in~\cite{mairal2012complexity,
giesen2010approximating}. On one hand, we show that when the input data is
``typical'', namely comes from a ``naturally smooth'' distribution, then with
high probability, the total number of linear segments, of the \emph{exact}
Lasso path, is already polynomially small. This part of our analysis provides
theoretical backing to the empirical findings reported
in~\cite{rosset2007piecewise}. On the other hand, when the input matrix is
atypical and induces a high-complexity path, then we can also obtain a
low-complexity approximate regularization path by adding a small amount of
random noise to the data and then solve the Lasso's regularization path on the
perturbed instance \emph{exactly}.  We also verify our analysis experimentally
in section~\ref{sec:exp}. We show that even a \emph{tiny} amount of
perturbation to high-complexity data matrices, results in a dramatic drop in
the number of linear segments.

The technique used in this paper is morally different from the smoothed
analysis obtained for simplex~\cite{spielman2001smoothed},
k-means~\cite{arthur2009k}, and the Perceptron~\cite{blum2002smoothed}, as
there is no concrete algorithm involved. We develop a new framework which
shows that when the total number of linear segments is excessively high, then we
can \emph{tightly} couple the solutions of the original, smoothed problems to
another set of solutions in a manner that does not depend on $\bG$. We then
use the randomness of $\bG$ to show that such couplings are unlikely to exist,
thus high complexity solutions are rare. We believe that our framework can
be extended to other problems such as the regularization path of support
vector machines.

\section{Preliminaries} \label{prelim:sec}
We use uppercase boldface letters, e.g, $\bX$, to denote matrices and
lowercase letters $x, w$ to denote vectors, variables, and scalars. We use
$\bX_i$ to denote the $i$'th column of $\bX$. Given a set $\set{S}$, we
denote by $\bX_{\set{S}}\in\reals^{d\times|\set{S}|}$ the sub-matrix of
$\bX$ whose columns are $\bX_i$ for $i \in \set{S}$.  Analogously,
$\bX_{\set{\bar{S}}}$ denotes the sub-matrix of $\bX$ with columns $\bX_i$
for $i \notin \set{S}$. For a matrix $\bX \in \mathbb{R}^{n \times d}$ with
$n \geq d$, we use the term \emph{smallest} (\emph{largest}) singular value
of $\bX$ to denote the smallest (largest) right singular value of $\bX$. We
define the \emph{generalized sign} of a scalar $b$ as follows,
\begin{align*}
\sign(b) = \left\{
  \begin{array}{rl}
    +1& \mbox{$b > 0$} \\
    -1& \mbox{$b < 0$} \\
     0& \mbox{$b = 0$}
   \end{array} ~ .\right.
\end{align*}

Let $y$ be a vector in $\mathbb{R}^n$ and let $\bX = \left[\bX_1, \cdots,
\bX_d\right]$ be a matrix in $\mathbb{R}^{n \times d}$. The Lasso is the
following regression problem,
\begin{align}
w[\lambda] = \displaystyle \arg\!\!\min_{w \in \mathbb{R}^d } \,
  \frac{1}{2} \| \bX w - y \large\|_2^2 + \lambda \|w \|_1
  ~ . \label{Lasso:eqn}
\end{align}
Here, $\lambda > 0$ is the regularization value. The value of $\lambda$
influences the sparsity level of the solution $w[\lambda]$. The larger
$\lambda$ is the sparser the solution is. When $\bX$ is of full column rank,
the solution to \eqref{Lasso:eqn} is unique. We therefore denote it by
$w[\lambda]$. We use $$\mathcal{P} = \{ w[\lambda] \mid \lambda > 0\}$$ to
denote the set of all possible solution vectors. This set is also referred to
as the \emph{entire regularization path}.

To establish out main result we need a few technical lemmas.  The first
Lemma from~\cite{mairal2012complexity} provides optimality conditions for
$w[\lambda]$.
\begin{lem}\label{lem:opt}
Let $\lambda > 0$, the $w[\lambda]$ is the optimal solution iff
it satisfies the following conditions,
\begin{enumerate}
\item There exists a vector $u[\lambda]$ s.t.
  $$\bX^{\top}(\bX w[\lambda] - y) = u[\lambda] ~ .$$
\item Each coordinate of $u[\lambda]$ satisfies,
\begin{align*}
  u_i[\lambda] = \left\{
    \begin{array}{ll}
      -\lambda \sign\left(w_i[\lambda] \right) & |w_i[\lambda]| > 0\\
      \in [-\lambda, \lambda] & \mbox{o.w.}
    \end{array} \right. ~ .
\end{align*}
\end{enumerate}
\end{lem}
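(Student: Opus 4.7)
The plan is to derive both conditions from first-order optimality of a convex objective. The Lasso objective $f(w) = \frac{1}{2}\|\bX w - y\|_2^2 + \lambda\|w\|_1$ is the sum of a smooth convex quadratic and a convex but non-smooth $1$-norm term, so $f$ is convex on all of $\mathbb{R}^d$. Since $\lambda > 0$ makes $f$ coercive, a minimizer exists, and by standard convex analysis $w[\lambda]$ is optimal if and only if $0 \in \partial f(w[\lambda])$.

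The next step is to compute the subdifferential coordinatewise. The quadratic part is differentiable with gradient $\bX^{\top}(\bX w - y)$. The subdifferential of the $\ell_1$ norm at $w$ is the set of vectors $s$ such that $s_i = \sign(w_i)$ whenever $w_i \neq 0$ and $s_i \in [-1,1]$ whenever $w_i = 0$. Hence $0 \in \partial f(w[\lambda])$ is equivalent to the existence of a vector $s \in \partial \|w[\lambda]\|_1$ satisfying $\bX^{\top}(\bX w[\lambda] - y) + \lambda s = 0$. Defining $u[\lambda] := \bX^{\top}(\bX w[\lambda] - y)$ gives condition (1), and the coordinatewise description of $s$ (after multiplying through by $-\lambda$) yields condition (2): on the support of $w[\lambda]$, $u_i[\lambda] = -\lambda \sign(w_i[\lambda])$; off the support, $u_i[\lambda] \in [-\lambda,\lambda]$.

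For the converse direction I would simply reverse the argument: given any $w[\lambda]$ and $u[\lambda]$ satisfying the two conditions, the vector $-u[\lambda]/\lambda$ lies in $\partial \|w[\lambda]\|_1$, so $0 \in \partial f(w[\lambda])$ and hence $w[\lambda]$ is a global minimizer by convexity. There is no real obstacle in this proof; the only point that needs care is the subdifferential of $|\cdot|$ at zero, which is the interval $[-1,1]$ rather than a single sign value, and this is exactly what produces the set-valued second case of the lemma. No smoothed-analysis machinery is invoked here — the result is the textbook KKT characterization of the Lasso.
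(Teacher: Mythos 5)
Your proof is correct: the objective is convex, coercive for $\lambda>0$, and the condition $0\in\partial\bigl(\tfrac12\|\bX w-y\|_2^2+\lambda\|w\|_1\bigr)$ at $w[\lambda]$, combined with the coordinatewise subdifferential of the $1$-norm, yields exactly the two conditions (including the sign convention $u_i[\lambda]=-\lambda\,\sign(w_i[\lambda])$ on the support), and convexity gives the converse. The paper does not prove this lemma at all --- it cites it from Mairal and Yu --- and your subdifferential/KKT argument is the standard derivation that the cited result rests on, so there is nothing to reconcile.
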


Let us denote the sign vector as $\sign(w[\lambda])$, which is obtained by
applying the generalized sign function $\sign(\cdot)$ element-wise to
$w[\lambda]$. The result of~\cite{mairal2012complexity} shows that
$\mathcal{P}$ is piecewise linear and unique in the following sense.
\begin{lem}\label{lem:piecewise_linear}
Suppose $\bX$ is of full column rank, then $\mathcal{P} = \{ w[\lambda] \mid
\lambda > 0\}$ is unique, well-defined, and $w[\lambda]$ is piecewise linear.
Moreover, for any $\lambda_1, \lambda_2 > 0$, if the sign vectors at
$\lambda_1$ and $\lambda_2$ are equal, $\sign(w[\lambda_1]) =
\sign(w[\lambda_2])$, then $w[\lambda_1]$ and $w[\lambda_2]$ are in the same
linear segment.
\end{lem}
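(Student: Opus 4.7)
The plan is to reduce the optimality conditions of Lemma~\ref{lem:opt} to a closed-form affine expression in $\lambda$ on each fixed sign vector, and then leverage linearity of the KKT residuals to certify that two optima with identical sign vectors must actually be connected by a single affine segment. Uniqueness and well-definedness of $w[\lambda]$ follow immediately: when $\bX$ has full column rank, $\bX^{\top}\bX$ is positive definite, so the objective in \eqref{Lasso:eqn} is strictly convex, yielding a unique minimizer for every $\lambda > 0$.

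Next I would fix a sign vector $s^{*} \in \{-1,0,+1\}^d$ with support $S^{*} = \{i : s^{*}_i \neq 0\}$, and derive the corresponding affine formula. Any $w[\lambda]$ with $\sign(w[\lambda]) = s^{*}$ has $w_{\bar{S}^{*}}[\lambda] = 0$, so $\bX w[\lambda] = \bX_{S^{*}} w_{S^{*}}[\lambda]$. Restricting condition~(1) of Lemma~\ref{lem:opt} to coordinates in $S^{*}$ and substituting condition~(2) gives
\begin{align*}
\bX_{S^{*}}^{\top} \bX_{S^{*}} \, w_{S^{*}}[\lambda] \;=\; \bX_{S^{*}}^{\top} y - \lambda\, s^{*}_{S^{*}}.
\end{align*}
Since $\bX$ has full column rank, so does $\bX_{S^{*}}$, and thus $\bX_{S^{*}}^{\top}\bX_{S^{*}}$ is invertible. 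This yields the candidate affine function $\tilde{w}[\lambda]$, equal to $(\bX_{S^{*}}^{\top}\bX_{S^{*}})^{-1}(\bX_{S^{*}}^{\top}y - \lambda s^{*}_{S^{*}})$ on $S^{*}$ and zero on $\bar{S}^{*}$.

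For the ``moreover'' claim, suppose $\sign(w[\lambda_1]) = \sign(w[\lambda_2]) = s^{*}$ with $\lambda_1 < \lambda_2$. By uniqueness, $w[\lambda_k]$ coincides with $\tilde{w}[\lambda_k]$ for $k = 1, 2$. It remains to show $\tilde{w}[\lambda]$ is optimal for every $\lambda \in [\lambda_1, \lambda_2]$, and by uniqueness this will force $w[\lambda] = \tilde{w}[\lambda]$ on the entire interval, placing $w[\lambda_1]$ and $w[\lambda_2]$ on the same linear segment. I need to verify two things. First, that $\sign(\tilde{w}_i[\lambda]) = s^{*}_i$ for $i \in S^{*}$: this follows because $\tilde{w}_i[\lambda]$ is affine in $\lambda$, agrees in sign with $s^{*}_i$ at both endpoints, and an affine function that is strictly positive (or strictly negative) at two points cannot vanish or flip sign in between. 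Second, that the dual variable $u_i[\lambda] = \bX_i^{\top}(\bX\tilde{w}[\lambda] - y)$ satisfies $|u_i[\lambda]| \leq \lambda$ for $i \notin S^{*}$. Here $u_i[\lambda]$ is affine in $\lambda$, and both $u_i[\lambda] - \lambda$ and $u_i[\lambda] + \lambda$ are likewise affine; since these bounds are satisfied at $\lambda_1$ and $\lambda_2$ by optimality of $w[\lambda_k]$, and an affine function on an interval attains its extrema at the endpoints, both inequalities hold throughout $[\lambda_1,\lambda_2]$.

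Global piecewise linearity of $\mathcal{P}$ now follows as a corollary. For each of the at most $3^{d}$ possible sign vectors $s^{*}$, the set of $\lambda$'s realizing $\sign(w[\lambda]) = s^{*}$ must be either empty or a single connected interval on which $w[\lambda]$ is affine, by the argument above. Hence $(0,\infty)$ decomposes into finitely many intervals with $w[\lambda]$ affine on each, proving piecewise linearity. The main subtlety I expect is the verification of the subgradient bound $|u_i[\lambda]| \le \lambda$ on the interior of the interval, since both the residual and the bound move simultaneously with $\lambda$; the point that rescues this step is that after rearranging the inequality, one compares two \emph{affine} functions of $\lambda$, for which endpoint verification suffices.
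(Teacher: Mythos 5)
Your argument is correct, and it is worth noting that the paper does not prove this lemma at all: it is imported verbatim from \cite{mairal2012complexity}, so there is no internal proof to compare against. Your route --- strict convexity from full column rank for uniqueness, the closed-form affine candidate $\tilde w[\lambda]$ with $\tilde w_{S^*}[\lambda]=(\bX_{S^*}^{\top}\bX_{S^*})^{-1}(\bX_{S^*}^{\top}y-\lambda s^*_{S^*})$ obtained from the KKT conditions of Lemma~\ref{lem:opt}, and the endpoint-to-interval transfer of both the sign constraints on $S^*$ and the dual feasibility $|u_i[\lambda]|\le\lambda$ off $S^*$ via affinity in $\lambda$ --- is essentially the standard proof found in that reference, and all the steps check out (including the sign convention $u_{S^*}=-\lambda s^*_{S^*}$, the invertibility of $\bX_{S^*}^{\top}\bX_{S^*}$, and the degenerate case $S^*=\emptyset$). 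The only refinement you might add, if you want the full strength of the usual statement, is continuity of $\lambda\mapsto w[\lambda]$ across breakpoints (e.g.\ via the strong-convexity perturbation bound of Lemma~\ref{lem:pscf_I}); the lemma as stated, and in particular the ``moreover'' clause, which is the part the paper actually uses, is completely established by your argument since it shows each sign pattern occupies a single interval on which the solution is affine.
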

We use $|\mathcal{P}|$ to denote the total number of linear segments in
$\mathcal{P}$.
We denote by $\alpha > 0$ the smallest singular value of $\bX$. Without
loss of generality, as we can rescale $\bX$ and $y$ accordingly, we assume
that $\| y \|_2 = 1$. To obtain our main result, we introduce the following
smoothness assumption on the data matrix $\bX$.
\begin{assu}[Smoothness]\label{ass:1}
$\bX$ is generated according to,
\begin{align*}\bX = \bhX  + \bG ~ ,\end{align*}
where $\bhX \in \mathbb{R}^{n \times d}$ ($n \geq d$) is a fixed unknown
matrix with $\| \bhX \|_2 \leq 1$. Each entry of $\bG$ is an i.i.d. sample
from the normal distribution with $0$ mean and variance of
$\frac{\sigma^2}{n}$.
\end{assu}
We use $\mathcal{N}(0, \sigma^2 / n)$ instead of $\mathcal{N}(0, \sigma^2)$
for the noise distribution $\bG$.
This choice implies that when $\sigma$ is a constant the spectral norm
of $\bG$ is also a constant in expectation, $\E[\| \bG \|_2] = O(1)$, see for
instance~\cite{rudelson2010non}. Therefore, the signal-to-noise ratio satisfies,
$$
\E\big[\|\bX\|_2 \|\bG\|_2^{-1}\big] = O(1) ~ .
$$

It is convenient to think of $\sigma$ as an arbitrary small constant. The
analysis presented in the sequel employs a fixed constant $c$ that does not
depend on the problem size. We use $f(\cdot) \gsim poly(\cdot)$
(analogously, $f(\cdot) \lsim poly(\cdot)$) to denote the fact that the
function $f$ is everywhere greater (smaller) than a polynomial function up
to a multiplicative constant. Equipped with the above conventions and the
smoothness assumption, the following lemma, due to~\cite{sankar2006smoothed},
characterizes the extremal singular values of $X$.
\begin{lem}\label{lem:alpha}
Let $\delta > 0$. With probability of at least $1-\delta$, the
smallest, denoted $\alpha$, and largest, denoted $\beta$, right
singular values of $\bX$ satisfy,
$$\alpha \gsim \frac{\delta\sigma}{d}
~~\mbox{and}~~
\beta \lsim 1 + \sigma\log({1}/{\delta}) ~ .$$
\end{lem}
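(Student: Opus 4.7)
}
The plan is to control the two extremal singular values separately and combine the estimates by a union bound over events of probability $\delta/2$ each. The upper bound on $\beta$ is the routine direction and uses nothing beyond the triangle inequality together with standard tail bounds for Gaussian matrices. The lower bound on $\alpha$ is the substantive content of the lemma and invokes the smoothed-analysis framework of Sankar--Spielman--Teng. Once both estimates are in hand, one simply rescales the failure parameter by a constant factor to match the stated form.

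For $\beta$, I would write
$$\beta \;=\; \|\bX\|_2 \;\leq\; \|\bhX\|_2 + \|\bG\|_2 \;\leq\; 1 + \|\bG\|_2,$$
using the hypothesis $\|\bhX\|_2\leq 1$. Since each entry of $\bG$ has variance $\sigma^2/n$, the rescaled matrix $\frac{\sqrt{n}}{\sigma}\bG$ has i.i.d.\ standard Gaussian entries, so the Davidson--Szarek bound (or equivalently, Gordon's comparison theorem together with Borell's Gaussian concentration inequality) yields
$$\|\bG\|_2 \;\leq\; \frac{\sigma}{\sqrt n}\bigl(\sqrt n + \sqrt d + t\bigr)$$
with probability at least $1-2e^{-t^2/2}$. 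Taking $t = \sqrt{2\log(4/\delta)}$ and using $n\geq d$ delivers $\beta \lsim 1+\sigma\log(1/\delta)$ on an event of probability $1-\delta/2$, with room to spare.

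For $\alpha$, the goal is an anti-concentration estimate of the form
$$\Pr\bigl(\sigma_{\min}(\bX)\leq \varepsilon\bigr) \;\lsim\; \frac{\varepsilon\, d}{\sigma},$$
after which substituting $\varepsilon = c\,\delta\sigma/d$ for a small absolute constant $c$ yields the claim. I would obtain this by quoting the Sankar--Spielman--Teng bound $\Pr(\sigma_{\min}(\bhX+\bG')\leq \varepsilon)\lsim \varepsilon\sqrt{d}/\tau$ for an $n\times d$ matrix with Gaussian perturbation of variance $\tau^2$ and then rescaling $\tau = \sigma/\sqrt{n}$ together with $n\geq d$. If a direct citation is undesired, the argument can be reproduced along the following lines: condition on all columns but one, say $\bX_d$, and use the identity $\sigma_{\min}(\bX) \leq \mathrm{dist}\bigl(\bX_d,\mathrm{span}(\bX_1,\dots,\bX_{d-1})\bigr)$. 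Conditionally, $\bX_d$ is a Gaussian with the prescribed covariance, and the projection of $\bX_d$ onto the one-dimensional orthogonal complement of the span of the remaining columns is a Gaussian random variable of standard deviation $\sigma/\sqrt{n}$. The Gaussian anti-concentration inequality gives that this projection lies within $\varepsilon$ of any fixed point with probability at most $O(\varepsilon\sqrt{n}/\sigma)$; summing over the $d$ columns (since any column could attain the minimum distance) yields the claimed linear-in-$\varepsilon$ bound.

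The main obstacle is the $\alpha$ bound: a naive $\varepsilon$-net argument over the unit sphere in $\mathbb{R}^d$ gives only an $(\varepsilon/\sigma)^d$-type estimate after a union bound, which is useless for the linear-in-$\delta$ scaling we need. The clever ingredient, due to Sankar--Spielman--Teng, is the column-wise conditioning described above, which converts a $d$-dimensional anti-concentration question into $d$ independent one-dimensional ones and thereby replaces the geometric loss of the net by the additive loss of a union bound. Once this is in place, the combination with the $\beta$ bound by union bound and an adjustment of constants in the definitions of $\gsim,\lsim$ completes the proof.
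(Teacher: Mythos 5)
Your high-level plan matches the paper's treatment of this lemma: the paper gives no proof at all and simply cites \cite{sankar2006smoothed}, so ``control $\beta$ by the triangle inequality plus a standard Gaussian operator-norm tail, and control $\alpha$ by a smoothed least-singular-value bound'' is exactly the intended route. Your $\beta$ argument is correct, and in fact slightly stronger than needed ($1+\sigma\sqrt{\log(1/\delta)}$ versus the stated $1+\sigma\log(1/\delta)$).

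The $\alpha$ half, however, has a genuine quantitative gap as written. The bound you quote, $\Pr[\sigma_{\min}\le\varepsilon]\lsim\varepsilon\sqrt d/\tau$, is the \emph{square} ($d\times d$) Sankar--Spielman--Teng statement. With the paper's normalization $\tau=\sigma/\sqrt n$ (equivalently, after passing to a $d\times d$ row-submatrix of $\bX$, which only decreases $\sigma_{\min}$), it yields $\Pr[\sigma_{\min}(\bX)\le\varepsilon]\lsim\varepsilon\sqrt{dn}/\sigma$, i.e.\ $\alpha\gsim\delta\sigma/\sqrt{dn}$. Since $n\ge d$ gives $\sqrt{dn}\ge d$, your phrase ``together with $n\ge d$'' points the wrong way: for $n\gg d$ this is strictly weaker than the claimed $\alpha\gsim\delta\sigma/d$, and the discrepancy is a growing factor $\sqrt{n/d}$, not a constant absorbable into $\gsim$. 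The self-contained sketch does not repair this. First, the orthogonal complement of $\mathrm{span}(\bX_1,\dots,\bX_{d-1})$ in $\mathbb{R}^n$ has dimension $n-d+1$; it is one-dimensional only when $n=d$, and projecting onto a single fixed direction gives a Gaussian of standard deviation $\sigma/\sqrt n$, which forces an extra $\sqrt n$ into the estimate. Second, the inequality you invoke, $\sigma_{\min}(\bX)\le\mathrm{dist}(\bX_i,\mathrm{span}\{\bX_j\}_{j\ne i})$, goes the wrong way for lower-bounding $\sigma_{\min}$; what is needed is $\sigma_{\min}(\bX)\ge d^{-1/2}\min_i\mathrm{dist}(\bX_i,\mathrm{span}\{\bX_j\}_{j\ne i})$, which costs a further $\sqrt d$. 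Carried out correctly, your sketch gives only $\alpha\gsim\delta\sigma/(d^{3/2}\sqrt n)$. To recover the $n$-free bound $\delta\sigma/d$ uniformly over $n\ge d$ you must genuinely exploit rectangularity, e.g.\ by using that each column distance is the norm of an $(n-d+1)$-dimensional noncentral Gaussian with per-coordinate deviation $\sigma/\sqrt n$ (whose small-ball probability improves geometrically in $n-d+1$, making the tall case easy and leaving the near-square case to the sharper SST argument), or by citing a rectangular smoothed least-singular-value estimate rather than the square one. This matters downstream, since the paper uses $\alpha$ raised to the fourth power in the final complexity bound.
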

The bound on $\beta$ lets assume henceforth that $\beta$ is $O(1)$ for any
reasonable choices of $\sigma$ and $\delta$. In our analysis We describe
explicitly dependencies on $\|X\|$ for clarification and states the main
results with $\beta=O(1)$.

\medskip

The main result of the paper is states in the following theorem.

\begin{thm}[Lasso's Smoothed Complexity]\label{thm:main}
Suppose assumption~\ref{ass:1} holds for arbitrary
$n, d \in \mathbb{Z}$ with $n \geq d$ and $\sigma\in ( 0, 1]$.
Then, with a probability of at least $1 - \delta$ (over the random selection
of $\bG$), the complexity of the Lasso satisfies,
\begin{align*}
  |\mathcal{P}| \lsim n^{1.1} \left(\frac{d}{\delta\sigma}\right)^6 ~ .
\end{align*}
\end{thm}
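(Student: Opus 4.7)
The plan is to convert bounding $|\mathcal{P}|$ into an anti-concentration argument over the Gaussian noise $\bG$. By Lemma~\ref{lem:piecewise_linear}, $|\mathcal{P}|$ equals the number of distinct values of $\sign(w[\lambda])$ that appear as $\lambda$ varies, so it suffices to bound the number of sign changes along the path. I first invoke Lemma~\ref{lem:alpha} to condition on the high-probability event that $\alpha \gsim \delta\sigma/d$ and $\beta \lsim 1$. On each linear segment with support $S$ and sign pattern $s_S$, the optimality conditions of Lemma~\ref{lem:opt} give the closed form $w_S[\lambda] = (\bX_S^\top \bX_S)^{-1}(\bX_S^\top y - \lambda s_S)$, which is well-conditioned in $\alpha$ and affine in $\lambda$.

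Next I analyze breakpoints. A transition at $\lambda_k$ happens because either (a) an active coordinate $w_i[\lambda_k]$ crosses zero, or (b) an inactive coordinate's subgradient $u_j[\lambda_k]$ hits $\pm \lambda_k$. Substituting the closed form for $w_S[\lambda]$, each breakpoint is the unique root of a scalar linear equation in $\lambda$ whose coefficients are explicit rational functions of the entries of $\bX$ and $y$. Since $\lambda$ ranges over the bounded interval $[0, \|\bX^\top y\|_\infty]$, which is $O(1)$ under the conditioning, if $|\mathcal{P}|$ were to exceed the target bound, pigeonhole would force the existence of two consecutive breakpoints with an extremely small gap.

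The core step is to translate such a small gap into an unlikely event for $\bG$. A small gap between breakpoints with adjacent sign patterns $s^{(k)}, s^{(k+1)}$ corresponds to two rational expressions in the entries of $\bX$ lying within a tiny distance, which yields an explicit polynomial inequality in the entries of $\bX = \bhX + \bG$. Using Carbery--Wright style anti-concentration for polynomials of Gaussian variables, I show that for any fixed choice of supports, sign patterns, and transitioning coordinate, the probability that such an inequality holds is polynomially small in the gap size and in $1/\sigma$, with a mild loss in the exponent that ultimately contributes the $n^{1.1}$ factor. This is the ``coupling'' alluded to in the introduction: once the certificate for the small-gap pair is specified, it is a statement about $\bhX$ and the sign patterns alone, so the independent random part of $\bX$ is what is responsible for the anti-concentration.

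Finally I carry out a union bound over certificates. Each certificate is described by a pair of adjacent sign vectors together with the entering/leaving coordinate and its sign, giving at most a polynomial-in-$d$ number of certificate types per transition. Balancing the anti-concentration probability against this combinatorial factor and the conditioning on $\alpha$ from Lemma~\ref{lem:alpha} yields the stated $n^{1.1}(d/(\delta\sigma))^6$ bound. I expect the main obstacle to be the construction of the algebraic certificate in the previous paragraph: because $(\bX_S^\top\bX_S)^{-1}$ entangles the Gaussian entries of $\bG$ in a high-degree rational expression, naive anti-concentration applied to $\lambda_k$ directly is too weak, and one must instead isolate a direction of fresh randomness that controls the transition value independently of the rest of the path's geometry.
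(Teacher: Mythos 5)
Your high-level instinct---turn an excessive number of segments into a coincidence that the Gaussian part of $\bX$ makes unlikely---is the right one, but the specific reduction you propose has two gaps that I do not think can be repaired in the form you state them.

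First, the event you try to rule out is too weak and your union bound over certificates does not close. Pigeonhole over $[0,\lambda_{\max}]$ only tells you that if $|\mathcal{P}|>B$ then \emph{some} pair of consecutive breakpoints is within $O(1/B)$; to conclude you would need $\Pr[\exists\text{ a gap} < O(1/B)]$ to be small, i.e.\ that \emph{all} breakpoints are pairwise separated by $1/\mathrm{poly}$. That statement is much stronger than polynomial path complexity and is essentially false at the failure probabilities you need: a single near-coincidence of two breakpoints is an event of only inverse-polynomial probability, and there are polynomially many opportunities for it, so the union bound lands at $\Theta(1)$. Moreover, the certificates are not ``polynomial-in-$d$ per transition'': a certificate must specify the support/sign pattern of the segment, the transitions themselves are random (so you cannot condition on them without circularity), and an honest union bound ranges over up to $3^d$ sign patterns. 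A per-certificate probability that is only polynomially small in the gap cannot beat that. The paper's proof is structured precisely to avoid both problems: it does not exclude close breakpoints, it excludes \emph{many} sign changes inside one tiny interval of length $\dlam$, which forces $s=\log_3 N$ coordinates to be simultaneously near their KKT boundary ($|w_i|\leq \Lw\dlam$, $||u_i|-\lambda_0|\leq\tilde\Lu\dlam$); the resulting bad event has probability $(\,e\tau\sqrt n/(\sigma\gamma)\,)^{s}$, exponentially small in $s$, which is then balanced against a union bound over only $\binom{d}{s}$ sets and $1/\dlam$ intervals.

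Second, you correctly identify---but do not resolve---the key technical obstacle: $(\bX_S^\top\bX_S)^{-1}$ entangles the randomness, and Carbery--Wright applied to the resulting high-degree rational/polynomial expression loses a $1/\mathrm{deg}$ exponent and, worse, requires a variance (or norm) lower bound that an adversarial $\bhX$ does not supply. The paper's substitute for your missing ``fresh direction'' is the coupling lemma (Lemma~\ref{lem:event}): the solution $v_{\bar S}[\lambda_0]$ of the Lasso restricted to zero out $S$ is within $\|\bX\|_2^2\sqrt{s}\,\Lw\dlam/\alpha^2$ of $w_{\bar S}[\lambda_0]$ (via the strong-convexity perturbation Lemma~\ref{lem:pscf_II}), is independent of the columns $\bX_S$, and so each $\bX_j^\top(\bX_{\bar S}v_{\bar S}[\lambda_0]-y)$ for $j\in S$ anti-concentrates using plain one-dimensional Gaussian anti-concentration, independently across $j$. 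Crucially this needs the subspace-distance bound $\ds\gsim \sigma\delta^{1/s}/\sqrt{dn}$ (Lemma~\ref{lem:sub_distance}) to lower-bound the residual norm $\|\bX_{\bar S}v_{\bar S}[\lambda_0]-y\|_2$; nothing in your proposal plays this role, and without it the anti-concentration bound degenerates. So the proposal as written has the right motif but is missing the three load-bearing pieces: the ``many sign changes in a short interval'' pigeonhole, the constrained-solution coupling, and the subspace-distance lower bound.
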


\section{Main Lemmas}

To prove the main theorem, we introduce several properties of
$\bX, y, w[\lambda]$ and $u[\lambda]$ that are critical in the
analysis of $|\mathcal{P}|$. We then use the smoothness
assumption to bound these properties.

\begin{defn}[Lipschitzness]
Let $w_{i}[\lambda]$ and $u_i[\lambda]$ be the value of the $i$'th
coordinate of $w[\lambda]$ and $u[\lambda]$ respectively for $i\in[d]$.
The coordinate-wise Lipschitz parameters of $w$ and $u$ are defined as,
\begin{align*}
\Lw = \max_{i \in [d]}
  \sup_{\lambda > 0} \Big| {\partial_{_\lambda} w_i[\lambda]} \Big|
~ ~ , ~ ~
\Lu =  \max_{i \in [d]}
  \sup_{\lambda > 0} \Big| {\partial_{_\lambda} u_i[\lambda]} \Big|
  ~ .
\end{align*}
\end{defn}
By definition, $\Lw$ and $\Lu$ characterize how much each coordinate of
$w[\lambda]$ and $u[\lambda]$ can change as we vary the value of $\lambda$.
We later use the smoothness assumption to show that $\Lw$ and $\Lu$ are
polynomially small. This implies that $w[\lambda]$ and $u[\lambda]$ would not
change too fast with $\lambda$. However, Lipschitzness by itself does not give
us a bound on $|\mathcal{P}|$ since $w[\lambda]$ can still oscillate around
\emph{zero} and induce an excessively large number of linear segments.
Therefore, we also need the following property which defines the restricted
distance between the column space of $\bX$ and $y$.
\begin{defn}[Subspace distance]\label{lem:sub_d}
For any $s, \delta > 0$,  let $\ds$ denote the largest value
such that,
 \begin{align*}
  \Pr\left[ \exists v \in \mathbb{R}^{d - s} \, \text{ s.t. }
    \left\|\bX_{\set{\bar{S}}} v - y \right\|_2 \leq
      \ds \right] \leq \delta ~ ,
  \end{align*}
for all $\set{S} \subset [d]$ of size $s$.
\end{defn}
This definition quantifies the distance of $y$ to a subspace spanned by
$s\leq{}d$ columns of $\bX$. Since $y \in \mathbb{R}^n$, $n \geq d$, and $\bX$
is smooth, it can be shown that $y$ cannot be too close to the subspace
spanned by $\bX_{\set{\bar{S}}}$. That is, $ \ds$ is inversely proportional to
a polynomial in $n, d, 1/\sigma$.  We interchangeably use in the following the
original matrix $\bX$ with $v\in\reals^d$ s.t. $v_i=0$ for $i\in\set{S}$ and
$\bX_{\set{\bar{S}}}$ with $v\in\reals^{d-s}$. Using the above properties, we
prove the following theorem.
\begin{thm}[Exact Smooth Complexity]\label{thm:tec_main}
Let $\bX$ satisfy Assumption~\ref{ass:1}.
Then, for all $s\in[d]$ and
$\delta>0$, with probability of at least $1 - \delta$ the complexity of the
Lasso satisfies,
\begin{align*}
|\mathcal{P}| \lsim
  3^s \left(
      \frac{\sqrt{sn} d \left(\frac{ \Lw}{\alpha^2} + \Lu \right)}
           {\delta^2 \sigma \ds}
    \right)^\frac{s}{s - 1}
  ~ .
\end{align*}
\end{thm}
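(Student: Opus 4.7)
The plan is to prove the bound by contradiction: I would assume $|\mathcal{P}|=N$ exceeds the claimed quantity and derive a configuration that violates the subspace distance $\ds$ with probability greater than $\delta$. First I would confine the path to a bounded interval. The KKT conditions of Lemma~\ref{lem:opt} imply $w[\lambda]=0$ whenever $\lambda\ge\|\bX^\top y\|_\infty=O(\beta)=O(1)$, so all $N-1$ breakpoints lie in $[0,\lambda_{\max}]$ with $\lambda_{\max}=O(1)$. Pigeonhole then produces a sub-interval $I=[\lambda^\star,\lambda^\star+\Delta]$ of length $\Delta\lsim s\lambda_{\max}/N$ containing at least $s$ of those breakpoints, and I let $\set{S}\subseteq[d]$ with $|\set{S}|\le s$ be the coordinates whose sign flips inside $I$.

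Next I would apply the Lipschitz bounds $\Lw$ and $\Lu$ to control the path inside $I$. At every breakpoint in $I$, the transitioning coordinate $i\in\set{S}$ either has $w_i=0$ (an ``exit'') or $|u_i|=\lambda$ (an ``entry'') by Lemma~\ref{lem:opt}, so that for every $\lambda\in I$,
\begin{align*}
|w_i[\lambda]|\lsim \Lw\Delta \quad\text{and}\quad \lambda-|u_i[\lambda]|\lsim (\Lu+1)\Delta, \qquad i\in\set{S}.
\end{align*}
In particular, $w[\lambda]$ is within $\sqrt{s}\,\Lw\Delta$ in $\ell_2$ of a vector supported on $\set{\bar{S}}$. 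I would then substitute this into the KKT identity $\bX^\top(\bX w[\lambda]-y)=u[\lambda]$ and invert using the smallest singular value $\alpha$ (Lemma~\ref{lem:alpha}), producing an explicit $v^\star\in\mathbb{R}^{d-s}$ whose residual $\|\bX_{\set{\bar{S}}}v^\star-y\|_2$ is bounded above by a polynomial in $n,d,\Lw/\alpha^2,\Lu$ multiplied by a power of $\Delta$. Definition~\ref{lem:sub_d} lower-bounds the same residual by $\ds$ with probability at least $1-\delta$. Matching the two estimates forces $\Delta$ to be large and hence $N$ to be small.

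To finish I would union-bound over the $\binom{d}{s}\le d^s$ choices of $\set{S}$ and the $2^s$ admissible entry-sign patterns of $u_{\set{S}}$, which together yield the $3^s$ prefactor once $d^s$ is absorbed into the polynomial factor of the bound. The hard part will be extracting the non-trivial exponent $s/(s-1)$: a bound that exploited only a single breakpoint would give $N$ merely proportional to $1/\ds$, losing an $(s-1)$-th root. The improvement must use all $s$ breakpoints jointly, most plausibly by recognizing that after quotienting out the one-parameter variation along the path, the $s$ breakpoints impose $s-1$ near-independent linear relations on the residual $\bX w-y$, so the error in the $\set{\bar{S}}$-approximation of $y$ scales like $\Delta^{s-1}$ rather than $\Delta$; balancing this refined estimate against $\ds$ then produces the claimed exponent, while the $\sqrt{sn}$ factor emerges from the Cauchy--Schwarz-style norm conversions between $\ell_\infty$ bounds on $u$ and $\ell_2$ bounds on $w$ in step two.
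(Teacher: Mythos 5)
Your proposal breaks down at its central step: you want to produce a vector $v^\star$ supported on $\set{\bar S}$ whose residual $\|\bX_{\set{\bar S}} v^\star - y\|_2$ is bounded \emph{above} by a quantity proportional to a power of $\Delta$, and then contradict the lower bound $\ds$. No such upper bound exists. The KKT identity $\bX^{\top}(\bX w[\lambda]-y)=u[\lambda]$ only controls the component of the residual lying in the column space of $\bX$; since $n\ge d$, the residual always contains the component of $y$ orthogonal to that column space, which is a fixed positive quantity that does not shrink as $\Delta\to 0$. So ``inverting with $\alpha$'' cannot make the residual small, and the contradiction you aim for never materializes. The paper uses $\ds$ in exactly the opposite direction: Lemma~\ref{lem:sub_distance} guarantees $\|\bX_{\set{\bar S}} v_{\set{\bar S}}[\lambda_0]-y\|_2\ge\gamma$, and this \emph{lower} bound feeds a Gaussian anti-concentration estimate (Lemma~\ref{lem:smoothed}): because the constrained solution $v_{\set{\bar S}}[\lambda_0]$ of \eqref{modlasso:eqn} is independent of the columns $\bG_j$, $j\in\set{S}$, the $s$ inner products $\bX_j^{\top}(\bX_{\set{\bar S}} v_{\set{\bar S}}[\lambda_0]-y)$ are conditionally independent Gaussians, each forced by the coupling (Lemma~\ref{lem:event}) to land within $\tau\lsim\sqrt{s}\,(\Lw/\alpha^2+\Lu+1)\dlam$ of $\pm\lambda_0$, an event of probability at most $\big(e\tau\sqrt n/(\sigma\gamma)\big)^{s}$. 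This probabilistic coupling is the missing idea in your sketch; your deterministic substitute (``$s$ breakpoints impose $s-1$ near-independent linear relations so the error scales like $\Delta^{s-1}$'') has no supporting mechanism and is not how the exponent arises.

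Two further gaps. First, your pigeonhole gives an interval with $s$ \emph{breakpoints}, but the argument needs $s$ \emph{distinct} coordinates changing sign (a single coordinate oscillating contributes only one Gaussian column and only a first power of $\tau$); the paper gets this from the uniqueness of sign patterns (Lemma~\ref{lem:piecewise_linear}): $N$ segments in an interval force at least $\log_3 N$ distinct sign-changing coordinates, which is also the true origin of the $3^s$ prefactor — not, as you suggest, a union bound over $2^s$ sign patterns absorbed with $d^s$. Second, the exponent $s/(s-1)$ comes from covering $[0,\lambda_{\max}]$ by $1/\dlam$ intervals and demanding that the per-interval failure probability, which scales like $\dlam^{s}$ through $\tau\propto\dlam$, be at most $\delta\dlam$ to survive the union bound over intervals; this yields $\dlam^{\,1-1/s}\gsim\delta^{1/s}\sigma\gamma/\big(d\sqrt{ns}\,(\Lw/\alpha^2+\Lu+1)\big)$ and hence $|\mathcal{P}|\lsim 3^s/\dlam$ with the stated exponent. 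Your single-interval, contradiction-based scheme has no analogue of this balance and, as you yourself note, would only yield a bound linear in $1/\ds$ even if the flawed residual estimate were repaired.
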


The following lemma characterizes the (smoothed) values of
$\Lw$, $\Lu$, and $\ds$.
\begin{lem} \label{lem:LD}
Let $\bX$ satisfy Assumption~\ref{ass:1}. Then with
probability of at least $1 - \delta$ the following properties hold,
\begin{align*}
  \Lw, \Lu  \lsim \frac{\sqrt{d}}{\alpha^2} ~~ , ~~
  \ds       \gsim \frac{\sigma}{\sqrt{dn } (d/\delta)^{2/s}} ~~ .
\end{align*}
\end{lem}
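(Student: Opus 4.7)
My plan is to handle the two Lipschitz bounds and the subspace-distance bound separately, since they rest on disjoint machinery. The Lipschitz claims are essentially deterministic (given that $\bX$ has full column rank, which holds with probability $1-\delta$ by Lemma~\ref{lem:alpha}), while the $\gamma_s$ claim is a genuine probabilistic statement about the smoothed column span of $\bX_{\bar S}$.

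For $\Lw$ and $\Lu$, I would argue inside each linear segment. By Lemma~\ref{lem:piecewise_linear} the support $S^{*} = \mathrm{supp}(w[\lambda])$ and the sign vector $\eta = \sign(w_{S^*}[\lambda])$ are constant on a segment, so the optimality conditions from Lemma~\ref{lem:opt} yield the closed form $w_{S^*}[\lambda] = (\bX_{S^*}^\top \bX_{S^*})^{-1}(\bX_{S^*}^\top y - \lambda \eta)$, with $w_i[\lambda] = 0$ for $i \notin S^*$. Differentiating in $\lambda$ gives $\partial_\lambda w_{S^*} = -(\bX_{S^*}^\top \bX_{S^*})^{-1} \eta$; since $\sigma_{\min}(\bX_{S^*}) \geq \alpha$ by singular-value interlacing on column submatrices, a coordinatewise bound yields $|\partial_\lambda w_i| \leq \sqrt{|S^*|}/\alpha^2 \leq \sqrt{d}/\alpha^2$. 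For $\Lu$ I would split by whether $i \in S^*$: for $i \in S^*$ the condition $u_i[\lambda] = -\lambda\eta_i$ gives $|\partial_\lambda u_i| = 1$, and for $i \notin S^*$ substituting the formula for $w$ into $u_i = \bX_i^\top(\bX w[\lambda] - y)$ gives $\partial_\lambda u_i = -\bX_i^\top \bX_{S^*}(\bX_{S^*}^\top \bX_{S^*})^{-1} \eta$, of magnitude at most $\|\bX_i\|_2 \sqrt{|S^*|}/\alpha \leq \beta \sqrt{d}/\alpha$. Since $\alpha \leq \beta = O(1)$, both are absorbed into $\sqrt{d}/\alpha^2$.

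For the subspace distance, I would fix an arbitrary $S$ of size $s$ and show that $\Pr[\min_{v \in \mathbb{R}^{d-s}} \|\bX_{\bar S} v - y\|_2 \leq \gamma] \leq \delta$ for $\gamma \gsim \sigma/(\sqrt{dn}(d/\delta)^{2/s})$. My plan is first to localize $v$ to the annulus $\Omega(1) \leq \|v\|_2 \leq O(1/\alpha)$: if $\|v\|_2$ is smaller then $\|\bX_{\bar S} v - y\|_2 \geq \|y\|_2 - \beta\|v\|_2 \geq 1/2 > \gamma$, and if $\|v\|_2$ is larger then $\|\bX_{\bar S} v\|_2 \geq \alpha\|v\|_2 > 2$, so the residual again exceeds $\gamma$. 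Within this range I would cover $v$ by a small $\epsilon$-net and, at each net point, write $\bX_{\bar S} v - y = (\bhX_{\bar S} v - y) + \bG_{\bar S} v$ with $\bG_{\bar S} v \sim \mathcal{N}(0, (\|v\|_2^2 \sigma^2 / n)\,I_n)$ and apply Gaussian anti-concentration. The key structural fact driving the $s$-th-root dependence is that $\mathrm{col}(\bX_{\bar S})^\perp$ has dimension $n - (d-s) \geq s$, so the squared projection of $y$ onto this complement admits a lower-tail estimate of order $\gamma^s$, akin to the $\mathrm{Beta}$/chi-squared behavior of projections of a fixed vector onto a random subspace with $s$ free directions. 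Setting this tail equal to $\delta$, absorbing the net overhead as a polynomial in $d$, and extracting the $s$-th root delivers $\gamma_s \gsim \sigma/(\sqrt{dn}(d/\delta)^{2/s})$.

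The main obstacle is precisely this $\gamma^s$ anti-concentration step: a naive union bound over an $\epsilon$-net for $v$ pays a factor exponential in $d-s$, which is far too weak. The saving must come from integrating out $\bG_{\bar S}$ in the directions orthogonal to $\mathrm{col}(\bX_{\bar S})$ and showing that the Gaussian noise populates those $\geq s$ complementary directions robustly, without inflating the prefactor beyond $\mathrm{poly}(d)$. It is exactly this $s$-th power in the tail that keeps the polynomial-in-$d/\delta$ overhead at the small exponent $2/s$, yielding the stated bound.
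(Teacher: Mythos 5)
Your treatment of $\Lw$ and $\Lu$ is correct, though it takes a different route than the paper: you differentiate the closed-form solution on each linear segment and use singular-value interlacing for column submatrices, whereas the paper proves a general perturbation lemma for $\alpha^2$-strongly convex objectives (Lemma~\ref{lem:pscf_I}, applied with $f(w)=\frac12\|\bX w-y\|_2^2$ and the $\sqrt d$-Lipschitz function $g(w)=\|w\|_1$), which gives $\|\partial_\lambda w[\lambda]\|_2\le \sqrt d/\alpha^2$ and then $\Lu\le\|\bX\|_2^2\sqrt d/\alpha^2$ without ever invoking the segment-wise formula. Both arguments are sound; the paper's is slightly more robust at non-differentiable breakpoints and reuses the same lemma later (Lemma~\ref{lem:pscf_II}) for the coupling step.

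The bound on $\ds$, however, is not proved in your proposal, and the missing step is exactly the one you flag as the obstacle. Reducing to $\mathrm{dist}(y,\mathrm{col}(\bX_{\set{\bar{S}}}))$ and expecting a $\gamma^{\Theta(s)}$ lower tail is the right target, but neither tool you invoke delivers it: an $\epsilon$-net over the annulus $\|v\|_2=O(1/\alpha)$ has $(1/(\alpha\gamma))^{d-s}$ points, which no per-point bound of order $\gamma^{s}$ can absorb, and the Beta/chi-squared heuristic for projections onto random subspaces requires rotation invariance, which fails here because $\mathrm{col}(\bX_{\set{\bar{S}}})$ is spanned by $\bhX_{\set{\bar{S}}}+\bG_{\set{\bar{S}}}$ with $\bhX$ completely arbitrary. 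The paper's Lemma~\ref{lem:sub_distance} closes this gap with a leave-one-column-out conditioning rather than a net: if $\|\bX_{\set{\bar{S}}}v-y\|_2\le\gamma$ and $\|y\|_2=1$ (with $\|\bX\|_2\lsim 1$ from Lemma~\ref{lem:alpha}), then $\|v\|_2\ge 1-\gamma$, so some coordinate $i\in\set{\bar{S}}$ has $|v_i|\ge 1/(2\sqrt d)$. Fixing all noise except $\bG_i$, one picks orthonormal vectors $u_1,\dots,u_{s-1}$ orthogonal to $\mathrm{span}\big(\{(\bhX)_i\}\cup\{\bX_j : j\ne i\}\cup\{y\}\big)$ (possible since $n\ge d$ leaves at least $s-1$ free directions) and projects the near-incidence $\bX_{\set{\bar{S}}}v-y=r$ onto each $u_j$: every term except $\bG_i v_i$ is annihilated, so $|\langle u_j,\bG_i\rangle|\le \gamma/|v_i|\le 2\sqrt d\,\gamma$. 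These are independent Gaussian small-ball events, each of probability $O(\sqrt{dn}\,\gamma/\sigma)$, giving an overall bound of order $\big(\sqrt{dn}\,\gamma/\sigma\big)^{s-1}$ that holds uniformly over all admissible $v$ -- no net is needed because only the magnitude of the single coordinate $v_i$ enters the estimate. Choosing $\gamma=\Theta\big(\sigma\delta^{1/s}/\sqrt{dn}\big)$, with the extra $(d/\delta)^{O(1/s)}$ slack in the lemma statement absorbing the union over the choice of $i$ (and of $\set{S}$), yields the stated $\ds\gsim \sigma/\big(\sqrt{dn}\,(d/\delta)^{2/s}\big)$. Without an argument of this kind, your proposal establishes only the Lipschitz half of the lemma.
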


Applying the bounds on $\Lw\,$, $\Lu\,$, $\ds\,$, and $\alpha$ to
Theorem~\ref{thm:tec_main} while letting $s$ be a sufficiently large constant,
we can directly prove Theorem~\ref{thm:main}. In Section~\ref{sec:1}, we use
the value of $\alpha$ to bound $\Lw$ and $\Lu$. In Section~\ref{sec:2}, we
employ the smoothness of $\bX$ to bound $\ds$. Finally, in Section~\ref{sec:3}
we prove Theorem~\ref{thm:tec_main}.

\section{Proof sketch}
Since $\|\bX\|_2 = O(1)$, there exists a constant $\lambda_{\max} = \Omega(1)$
such that for $\lambda\geq\lambda_{\max}$, $w[\lambda]$ is the zero vector.
Thus, we can divide $\lambda\in[0,\lambda_{\max}]$ into
${\lambda_{\max}}/{\dlam}$ intervals, each of size $\dlam$ for some (inversely
polynomial) small $\dlam$. We then show that within every interval the total
number of linear segments exceeds a fixed polynomial number with exponentially
small probability. The total number of linear segments follows by taking a
union bound over all intervals. We need to specifically address the following
two questions in the analysis.

\subsection*{What if $|\mathcal{P}|$ within an intervals is excessively large?}
We will show that when there are $N$ linear segments in an interval, then there
must be at least $\log_3(N)$ many coordinates of $w[\lambda]$ , which we
denote as the set $\set{S}$, that change their sign. Since $\dlam$
is small and $w[\lambda]$ is a Lipschitz function in $\lambda$, we know that
those coordinates of $w[\lambda]$ must be close to zero. Therefore, we can
show that $w[\lambda]$ is close to the optimal solution, $v[\lambda]$, of the
Lasso problem when the entries of coordinates in $\set{S}$ are constrained to
be \emph{exactly} zero,
\begin{align}
  v[\lambda] = \displaystyle \arg\!\!\min_{w\in\reals^d} \,
  \frac{1}{2} \| \bX w - y \large\|_2^2 + \lambda \|w \|_1 ~ \mbox{ s.t. } ~
  \forall i\in\set{S}:\, w_i=0
  ~ . \label{modlasso:eqn}
\end{align}

\subsection*{What if $w[\lambda]$ oscillates excessively around $v[\lambda]$?}
From the optimality condition of $u[\lambda]$ and the smoothness of
$u[\lambda]$, we also know that the coordinates in $\set{S}$ of $u[\lambda]$
must be close to either $-\lambda$ or $\lambda$. Thus,
$u_{\set{S}}[\lambda] \eqdef
  \bX_{\set{S}}^{\top} (\bX\,w[\lambda] - y)$ is close
to a vector on the scaled hypercube $\{ -\lambda, \lambda\}^{| \set{S} |}$.
On the other hand,
if $w[\lambda]$ is close to $v[\lambda]$, we know that
$\bX_{\set{S}}^{\top} (\bX\,v[\lambda] - y)$ is also close to a vector in
$\{ -\lambda, \lambda\}^{| \set{S} |}$.
However, $v[\lambda]$ does not depend on $\bX_{\set{S}}$ by construction.
Therefore, the residual $\bX v[\lambda] - y$ also does not
depend on $\bX_{\set{S}}$. Thus, using the randomness etched in $\bX_{\set{S}}$
we can now show that $\bX_{\set{S}}^{\top} (\bX v[\lambda] - y)$ is close to
a vector in $\{ -\lambda, \lambda\}^{| \set{S} |}$ with probability which is
exponentially small in the size of $\set{S}$.  Therefore, we know that w.h.p.
the total number of linear segments in this interval is unlikely to be large.

\section{Bounding $\Lw$ and $\Lu$}\label{sec:1}
Recall that we denote the smallest singular value of $\bX$ by $\alpha$. We
first show the following lemma regarding perturbations of strongly convex
functions. Also recall that a second-order smooth function
$f:\mathbb{R}^d\to\mathbb{R}$ is \emph{$\alpha^2$-strongly convex} if
$\nabla^2 f(x) \geq \alpha^2$ for all $x \in \mathbb{R}^d$ and is
$\lips$-Lipschitz if $\|\nabla{}f(x)\|_{2}\leq{}\lips$ for all $x \in \mathbb{R}^d$.

\begin{lem}[Perturbation of strongly convex functions I]\label{lem:pscf_I}
Let $f: \mathbb{R}^d \to \mathbb{R}$ be an non-negative, $\alpha^2$-strongly
convex function. Let $g: \mathbb{R}^d \to \mathbb{R}$ be a $\lips$-Lipschitz
non-negative convex function . For any $\lambda \geq 0$, let $z[\lambda]$
be the minimizer of $f(z) + \lambda g(z)$, then we have,
$$\left\| \frac{d z[\lambda]}{d\lambda} \right\|_2 \leq \frac{\lips}{\alpha^2} ~.$$
\end{lem}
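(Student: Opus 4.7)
\medskip

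The plan is to avoid differentiating the first-order optimality condition, since the main use case has $g(w)=\|w\|_1$ which is not smooth, and instead derive the Lipschitz bound on $z[\lambda]$ directly from a two-point comparison. Fix $\lambda_1 < \lambda_2$ and write $z_1=z[\lambda_1]$, $z_2=z[\lambda_2]$. Using that $z_i$ is the minimizer at $\lambda_i$ and that $f+\lambda_i g$ is $\alpha^2$-strongly convex (because $f$ is $\alpha^2$-strongly convex and $\lambda_i g$ is convex for $\lambda_i \geq 0$), I would invoke the standard strong-convexity inequality at the minimizer in two directions:
\begin{align*}
f(z_2)+\lambda_1 g(z_2) &\geq f(z_1)+\lambda_1 g(z_1)+\tfrac{\alpha^2}{2}\|z_2-z_1\|_2^2,\\
f(z_1)+\lambda_2 g(z_1) &\geq f(z_2)+\lambda_2 g(z_2)+\tfrac{\alpha^2}{2}\|z_2-z_1\|_2^2.
\end{align*}

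Adding these two inequalities cancels the $f$-terms and yields
$(\lambda_2-\lambda_1)\bigl(g(z_1)-g(z_2)\bigr)\geq \alpha^2\|z_2-z_1\|_2^2$.
Combined with the $\lips$-Lipschitzness of $g$, which gives $g(z_1)-g(z_2)\leq \lips\|z_2-z_1\|_2$, I obtain
$\alpha^2\|z_2-z_1\|_2^2 \leq \lips(\lambda_2-\lambda_1)\|z_2-z_1\|_2$,
so $\|z_2-z_1\|_2 \leq (\lips/\alpha^2)(\lambda_2-\lambda_1)$. Dividing by $\lambda_2-\lambda_1$ and letting $\lambda_2\to\lambda_1$ (or simply interpreting the derivative $\tfrac{dz[\lambda]}{d\lambda}$ in the Lipschitz sense where the derivative exists, which it does almost everywhere by this bound) delivers the claim.

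The main conceptual hurdle is precisely the non-smoothness of $g$: one is tempted to differentiate $\nabla f(z[\lambda])+\lambda\partial g(z[\lambda])\ni 0$ and read off $dz/d\lambda=-(\nabla^2 f+\lambda\nabla^2 g)^{-1}\nabla g(z[\lambda])$, but this is not literally well defined when $g$ has kinks. The two-point argument above sidesteps this entirely and, once written out, is essentially algebra; the only ingredients are the minimizer property, strong convexity of $f$, convexity of $g$, and Lipschitzness of $g$, each of which holds in the stated generality. Since the later application will plug in $g(w)=\|w\|_1$ (hence $\lips=\sqrt{d}$ in the relevant coordinate-projection sense) and $f(w)=\tfrac12\|\bX w-y\|_2^2$ (hence $\alpha^2$ is the smallest eigenvalue of $\bX^\top\bX$), this bound on $\|dz/d\lambda\|_2$ is exactly what will be needed to control $\Lw$ in Section~\ref{sec:1}.
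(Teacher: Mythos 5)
Your proposal is correct and follows essentially the same route as the paper's own proof: the paper likewise avoids differentiating the optimality condition, applies the strong-convexity inequality at the two minimizers $z[\lambda]$ and $z[\lambda+\tau]$ in both directions, sums to cancel the $f$-terms, and combines the resulting bound $\alpha^2\|\veps\|_2^2 \leq \tau\,(g(z)-g(z+\veps)) \leq \tau\lips\|\veps\|_2$ with a limit $\tau\to 0^+$. The only differences are notational ($\lambda_1,\lambda_2$ versus $\lambda,\lambda+\tau$), so nothing further is needed.
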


\begin{proof}[\bf{Proof of Lemma~\ref{lem:pscf_I}}]
For any $\tau \geq 0$ and $\lambda \geq 0$, let us abbreviate
$z = z[\lambda]$ and denote $\veps = z[\lambda + \tau] - z$.
From $\alpha^2$-strong convexity of $f$ at $z $ and the
optimality of $z$ at $\lambda$, we know that
\begin{align}\label{eq:fahfoasu}
\frac{1}{2}\alpha^2 \| \veps \|_2^2 + f(z) +
  \lambda g(z) \leq f(z + \veps) + \lambda g(z + \veps) ~.
\end{align}
Moreover, using the optimality of $z + \veps$ at $\lambda + \tau$,
we know that,
\begin{align}
&\frac{1}{2}\alpha^2 \| \veps \|_2^2 +
  f(z + \veps) + (\lambda + \tau) g(z + \veps) 
\leq f(z ) + (\lambda + \tau) g(z )\label{eq:fahfoasu2} ~ .
\end{align}
Summing Eqs.~\eqref{eq:fahfoasu} and \eqref{eq:fahfoasu2} and rearranging
terms yields,
\begin{align*}
\alpha^2 \| \veps \|_2^2  \leq \tau \left( g(z) - g(z + \veps) \right)
  \leq \tau \| \veps \|_2 \lips ~,
\end{align*}
where the last inequality is due to the Lipschitzness assumption on $g$.
Therefore, we get that ${\|\veps\|_2}/{\tau} \leq {\lips}/{\alpha^2}$. Letting
$\tau \to 0^+$ completes the proof.
\end{proof}
Using Lemma~\ref{lem:pscf_I} with $f(w)  = \frac{1}{2} \| \bX w - y \|_2^2$
and $g(w) = \| w\|_1$, we obtain Lipschitz properties for $w[\lambda]$ and
$u[\lambda]$. Since we assume that the minimum singular value of $\bX$ is
$\alpha$ then $f(w)$ is $\alpha^2$-strongly convex. In addition, the norm
of $\nabla g(w)$ is clearly at most $\sqrt{d}$.
To simplify notation, when $w[\lambda]$ is not differentiable
at a point $\lambda$, we define ${d w[\lambda]}/{d \lambda} = 0$.
Due to Lipschitzness and strong convexity all vectors in the subgradient
set $\partial_\lambda\,w[\lambda]$ include this particular choice for
a subgradient. In summary we get the following corollary.
\begin{cor}[Lipschitzness of $w$] \label{lem:lip_w}
For any $\lambda \geq 0$ it holds that,
$$ \left\|\frac{d w[\lambda]}{d \lambda} \right\|_2  \leq
  \frac{\sqrt{d}}{~\alpha^2} ~ .  $$
\end{cor}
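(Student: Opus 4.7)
The plan is to obtain the corollary as an essentially immediate application of Lemma~\ref{lem:pscf_I} to the Lasso objective. Set $f(w) = \tfrac{1}{2}\|\bX w - y\|_2^2$ and $g(w) = \|w\|_1$. Both are non-negative and convex, and the Lasso solution $w[\lambda]$ is by definition the minimizer of $f(w) + \lambda g(w)$, so it coincides with the $z[\lambda]$ appearing in the lemma.

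Next, I would verify the two quantitative hypotheses of Lemma~\ref{lem:pscf_I}. For strong convexity, the Hessian of $f$ is $\bX^\top \bX$, whose smallest eigenvalue is exactly $\alpha^2$ by the definition of $\alpha$ as the smallest singular value of $\bX$; hence $f$ is $\alpha^2$-strongly convex. For the Lipschitz constant of $g$, every element of the subdifferential $\partial \|w\|_1$ lies in $[-1,1]^d$, so its Euclidean norm is at most $\sqrt{d}$. Thus we may take $\lips = \sqrt{d}$, and the lemma yields $\|dw[\lambda]/d\lambda\|_2 \leq \sqrt{d}/\alpha^2$ wherever $w[\lambda]$ is differentiable.

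The one subtle point, and the only thing worth flagging, is the handling of the non-differentiable points of $w[\lambda]$: by Lemma~\ref{lem:piecewise_linear} the path is piecewise linear, so non-differentiability occurs only at the finitely many kinks between linear segments. The preceding paragraph in the excerpt already disposes of this by adopting the convention $dw[\lambda]/d\lambda = 0$ at such points, which is a valid subgradient choice and is automatically dominated by the bound from the lemma. I do not anticipate any real obstacle here — the corollary is essentially a direct specialization of Lemma~\ref{lem:pscf_I}, with the only substantive input being the identification of $\alpha^2$ as the strong convexity constant of $f$ and $\sqrt{d}$ as a Lipschitz constant for the $\ell_1$ norm on $\mathbb{R}^d$.
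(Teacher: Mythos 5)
Your proposal matches the paper's own argument: it applies Lemma~\ref{lem:pscf_I} with $f(w)=\tfrac12\|\bX w-y\|_2^2$ (which is $\alpha^2$-strongly convex since $\alpha$ is the smallest singular value of $\bX$) and $g(w)=\|w\|_1$ (with Lipschitz constant $\sqrt{d}$), and handles the non-differentiable kinks by the same convention the paper adopts. This is correct and essentially identical to the paper's proof.
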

\noindent
Since by definition, $u[\lambda] = \bX_i^{\top} (\bX w[\lambda] - y )$,
we obtain a similar corollary for $u$.
\begin{cor}[Lipschitzness of $u$]
For any $\lambda \geq 0$ it holds that,
$$\left\|\frac{d u[\lambda]}{d \lambda}
  \right\|_2 \leq \frac{\| \bX \|_2^2 \sqrt{d}}{\alpha^2} ~ .$$
\end{cor}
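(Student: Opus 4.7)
The plan is to derive the bound on $\|du[\lambda]/d\lambda\|_2$ directly from the preceding corollary on $w[\lambda]$ by exploiting the linear relationship between $u[\lambda]$ and $w[\lambda]$ supplied by the optimality condition in Lemma~\ref{lem:opt}. Specifically, the first condition of Lemma~\ref{lem:opt} gives
\[
u[\lambda] \;=\; \bX^{\top}\!\bigl(\bX w[\lambda] - y\bigr) ~.
\]
Since $\bX$ and $y$ do not depend on $\lambda$, applying the chain rule wherever $w[\lambda]$ is differentiable yields
\[
\frac{du[\lambda]}{d\lambda} \;=\; \bX^{\top}\bX \, \frac{dw[\lambda]}{d\lambda}~,
\]
and at points of non-differentiability we use the same convention as for $w[\lambda]$, namely declaring the derivative to be zero (which is consistent with a valid element of the subgradient since the subgradient of $w$ at those points contains zero).

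Taking Euclidean norms and applying the operator-norm inequality gives
\[
\left\|\frac{du[\lambda]}{d\lambda}\right\|_2 \;\leq\; \|\bX^{\top}\bX\|_2 \cdot \left\|\frac{dw[\lambda]}{d\lambda}\right\|_2 \;=\; \|\bX\|_2^{2}\cdot \left\|\frac{dw[\lambda]}{d\lambda}\right\|_2 ~,
\]
using the standard identity $\|\bX^{\top}\bX\|_2 = \|\bX\|_2^{2}$ for the spectral norm. Finally, plugging in the bound $\|dw[\lambda]/d\lambda\|_2 \leq \sqrt{d}/\alpha^{2}$ from Corollary~\ref{lem:lip_w} delivers the claimed inequality.

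There is essentially no technical obstacle here: the argument is a one-line chain rule followed by operator-norm submultiplicativity. The only mild subtlety worth explicitly acknowledging is the treatment of points at which $w[\lambda]$ is non-differentiable (i.e., the breakpoints of the piecewise-linear path from Lemma~\ref{lem:piecewise_linear}), which is resolved by the same subgradient convention that justified Corollary~\ref{lem:lip_w}. Consequently the statement is really a corollary of the corollary on $w$, and its placement in the paper is merely to record the quantity $\Lu$ in a form directly usable inside Theorem~\ref{thm:tec_main}.
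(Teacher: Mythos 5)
Your argument is correct and is exactly the route the paper takes (the paper only sketches it with the remark that $u[\lambda]=\bX^{\top}(\bX w[\lambda]-y)$, leaving the chain rule and the bound $\|\bX^{\top}\bX\|_2=\|\bX\|_2^2$ implicit). Nothing further is needed.
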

\noindent Recall that $\|\bX\|_2=O(1)$, thus, from the above corollaries
we get
\begin{equation} \label{eqn:lips_bounds}
\Lw,\Lu\lsim\sqrt{d}/\alpha^2 ~ .
\end{equation}
We also use in the next section
the following Lemma.
\begin{lem}[Perturbation of strongly convex functions II] \label{lem:pscf_II}
Let $f: \mathbb{R}^d \to \mathbb{R}$ be an $\alpha^2$-strongly convex
function and $g: \mathbb{R}^d \to \mathbb{R}$ an $\lips$-Lipschitz convex
function. Let $z_1$ and $z_2$ be the minimizers of $f(z)$ and $f(z) + g(z)$,
respectively, then
$$\left\| z_1 - z_2 \right\|_2  \leq \frac{\lips}{\alpha^2} ~ .$$
\end{lem}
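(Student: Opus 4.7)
The plan is to mirror the two–sided optimality argument used for Lemma~\ref{lem:pscf_I}, specialized to a comparison between the minimizer of $f$ and the minimizer of $f+g$ rather than between two points on the regularization path. The key observation is that $f$ is $\alpha^2$-strongly convex by hypothesis, and $f+g$ is also $\alpha^2$-strongly convex since $g$ is convex (so adding $g$ only makes the Hessian larger in the PSD sense). This lets me invoke the strong-convexity inequality at both $z_1$ (as the minimizer of $f$) and $z_2$ (as the minimizer of $f+g$).

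Concretely, the first step is to write the strong-convexity lower bound for $f$ centered at its minimizer $z_1$, evaluated at $z_2$:
\begin{align*}
\tfrac{1}{2}\alpha^2\,\|z_1-z_2\|_2^2 + f(z_1) \;\leq\; f(z_2),
\end{align*}
using $\nabla f(z_1)=0$. The second step is the analogous inequality for $f+g$ centered at its minimizer $z_2$, evaluated at $z_1$:
\begin{align*}
\tfrac{1}{2}\alpha^2\,\|z_1-z_2\|_2^2 + f(z_2) + g(z_2) \;\leq\; f(z_1) + g(z_1),
\end{align*}
using that $0 \in \partial(f+g)(z_2)$ (a subgradient version is enough if $g$ is only assumed convex and not differentiable).

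Adding the two inequalities cancels the $f(z_1)$ and $f(z_2)$ terms and produces
\begin{align*}
\alpha^2\,\|z_1-z_2\|_2^2 \;\leq\; g(z_1) - g(z_2) \;\leq\; \lips\,\|z_1-z_2\|_2,
\end{align*}
where the second inequality is the $\lips$-Lipschitz hypothesis on $g$. Dividing by $\alpha^2\,\|z_1-z_2\|_2$ (the inequality is trivial if $z_1=z_2$) yields the claimed bound $\|z_1-z_2\|_2\leq \lips/\alpha^2$.

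I do not anticipate a genuine obstacle here: the argument is a direct static analog of the dynamic perturbation bound in Lemma~\ref{lem:pscf_I} (indeed one may view Lemma~\ref{lem:pscf_II} as the $\lambda=0$, $\tau=1$ instance of the same computation). The only mild care needed is the handling of non-smooth $g$, which is absorbed by interpreting the first-order optimality of $z_2$ in the subgradient sense; the Lipschitz inequality $g(z_1)-g(z_2)\leq \lips\|z_1-z_2\|_2$ then closes the loop without requiring differentiability.
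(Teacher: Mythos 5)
Your proof is correct and follows essentially the same route as the paper: strong-convexity inequalities at the two minimizers $z_1$ and $z_2$, summed to cancel the $f$ terms, followed by the Lipschitz bound $g(z_1)-g(z_2)\leq \lips\|z_1-z_2\|_2$. The remark about handling nondifferentiable $g$ via subgradient optimality is a fine (and harmless) refinement of the same argument.
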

\begin{proof}[\bf{Proof of Lemma \ref{lem:pscf_II}}]
Let $\veps = z_2 - z_1$. From strong convexity of $f$ and optimality of
$z_1$, we know that
\begin{align}\label{eq:fahfoasu3}
  \frac{1}{2}\alpha^2 \| \veps \|_2^2 + f(z_1) \leq f(z_2)  ~.
\end{align}
Due to the optimality of $z_2$ for $f(z) + g(z)$ we get,
\begin{align}\label{eq:fahfoasu4}
\frac{1}{2}\alpha^2 \| \veps \|_2^2 + f(z_2) +g(z_2) \leq f(z_1) + g(z_1)
~ .
\end{align}
Summing Eq.~\eqref{eq:fahfoasu3} and \eqref{eq:fahfoasu4} and rearranging
terms gives,
\begin{align*}
\alpha^2 \| \veps \|_2^2  \leq  g(z_1) - g(z_2) \leq \lips \| \veps \|_2  ~.
\end{align*}
We therefore get that $\| \veps \|_2 \leq \frac{\lips}{\alpha^2}$.
\end{proof}

\section{Bounding $\ds$}\label{sec:2}
In this section we prove that Assumption~\ref{ass:1} also yields a bound
on $\ds$.

\begin{lem}\label{lem:sub_distance}
Let $y \in \mathbb{S}^{n-1}$ be an arbitrary unit vector. Then for any
$s \in \{10,\dots,d\}$ and a set $\set{S} \subset [d]$ of size $s$, the
following holds,
$$
\Pr\left[
    \exists v \in \mathbb{R}^{d - s} \; \mbox{ s.t. } \;
    \left\|\bX_{\set{\bar{S}}} v - y \right\|_2
      \lsim \frac{\sigma {\delta}^{\frac1s}}{\sqrt{dn}}
  \right] \leq \delta ~ .
$$
\end{lem}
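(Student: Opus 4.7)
My plan is to reduce the problem to an isotropic Gaussian anti-concentration estimate in $\mathbb{R}^s$, exploiting the \emph{independence} between $\bX_{\set{S}}$ and $\bX_{\set{\bar{S}}}$ that comes from the i.i.d.\ structure of $\bG$. The idea is that the $s$ columns indexed by $\set{S}$ can be used as a Gaussian ``random probe'' into the $\geq s$-dimensional orthogonal complement of $\text{col}(\bX_{\set{\bar{S}}})$, and Gaussian anti-concentration in $s$ dimensions will supply the characteristic exponent $1/s$.

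First I would exhibit a deterministic lower bound for the distance using the test directions supplied by $\bX_{\set{S}}$. Let $P$ denote the orthogonal projection onto $\text{col}(\bX_{\set{\bar{S}}})^{\perp}$ and set $w := Py$, so that $\|w\|_2$ equals $\text{dist}(y,\text{col}(\bX_{\set{\bar{S}}}))$. Because each column of $P \bX_{\set{S}}$ lies in $\text{col}(\bX_{\set{\bar{S}}})^{\perp}$, writing $Q := P\bX_{\set{S}}$ and computing $y^{\top} P_Q y$ yields
\[ \text{dist}(y,\text{col}(\bX_{\set{\bar{S}}}))^2 \,\geq\, (Q^{\top} y)^{\top} (Q^{\top} Q)^{-1} (Q^{\top} y) \,\geq\, \frac{\|Q^{\top} y\|_2^2}{\|Q\|_2^2}. \]
Since $Q^{\top} y = \bX_{\set{S}}^{\top} w$ and $\|Q\|_2 \leq \|\bX_{\set{S}}\|_2$, this gives $\text{dist}(y,\text{col}(\bX_{\set{\bar{S}}})) \gsim \|\bX_{\set{S}}^{\top} w\|_2/\|\bX_{\set{S}}\|_2$.

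Next comes Gaussian anti-concentration. Because $\bG_{\set{S}}$ is independent of $\bG_{\set{\bar{S}}}$, conditional on $\bX_{\set{\bar{S}}}$ (and hence on $w$) the vector $\bX_{\set{S}}^{\top} w$ is Gaussian in $\mathbb{R}^s$ with mean $\bhX_{\set{S}}^{\top} w$ and isotropic covariance $(\sigma^2 \|w\|_2^2/n)\, I_s$. Bounding the Gaussian density on a ball of radius $\tau$ and using $\Gamma(s/2+1) \gsim (s/(2e))^{s/2}$,
\[ \Pr\!\left[\|\bX_{\set{S}}^{\top} w\|_2 \leq \tau \,|\, \bX_{\set{\bar{S}}}\right] \,\lsim\, \left(\frac{\tau \sqrt{n/s}}{\sigma \|w\|_2}\right)^{s}. \]
Combined with the deterministic bound from the previous paragraph and the high-probability control $\|\bX_{\set{S}}\|_2 \lsim 1$ from Lemma~\ref{lem:alpha}, this gives $\Pr[\text{dist} \leq \gamma] \lsim (\gamma \sqrt{n/s}/(\sigma \|w\|_2))^s$ on a good event.

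To make the bound unconditional, I would break the circularity between $\|w\|_2$ and the anti-concentration bound by splitting into cases on a threshold $w_0$. For $\|w\|_2 < w_0$ I invoke the crude smoothed estimate $\text{dist} \gsim \sigma\delta/d$, which can be derived by writing $\bX_{\set{\bar{S}}} = y a^{\top} + B$ with $a = \bX_{\set{\bar{S}}}^{\top} y$ and $B = P_{y^{\perp}} \bX_{\set{\bar{S}}}$, observing the closed form $\min_v \|\bX_{\set{\bar{S}}} v - y\|^2 = 1/(1 + a^{\top}(B^{\top}B)^{-1} a)$, and applying Lemma~\ref{lem:alpha} to the smoothed matrix $B$. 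For $\|w\|_2 \geq w_0$ the $s$-dimensional Gaussian anti-concentration supplies the $\delta^{1/s}$ tail. Balancing the two contributions to total at most $\delta$ while keeping $\gamma$ as large as possible gives the stated $\sigma\delta^{1/s}/\sqrt{dn}$.

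\textbf{Main obstacle.} The delicate step is this balancing: the anti-concentration bound degrades as $\|w\|_2 \to 0$, while the only a priori lower bound on $\|w\|_2$ is the relatively weak $\sigma\delta/d$ from Lemma~\ref{lem:alpha}. Carefully choosing the threshold $w_0$ and the sub-failure probabilities --- so that the total probability is $\leq \delta$ while $\gamma$ remains as large as $\sigma\delta^{1/s}/\sqrt{dn}$ --- is where the analysis is most intricate, and it is precisely the place where the exponent $1/s$ from $s$-dimensional isotropic Gaussian anti-concentration shows up in the final bound. The hypothesis $s \geq 10$ in the statement should enter through the constants produced by this balancing.
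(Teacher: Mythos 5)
Your reduction leans on the wrong source of randomness, and this is fatal rather than a technical nuisance. The event in the lemma, $\{\exists v:\ \|\bX_{\set{\bar{S}}}v-y\|_2\le\gamma\}$, is exactly $\{\mathrm{dist}(y,\mathrm{col}(\bX_{\set{\bar{S}}}))\le\gamma\}$, and the quantity $\|w\|_2=\|Py\|_2$ you probe \emph{is} that distance; it is measurable with respect to $\bG_{\set{\bar{S}}}$ alone and does not depend on $\bG_{\set{S}}$ at all. Conditioned on $\bX_{\set{\bar{S}}}$ the probability of the event is $0$ or $1$, so anti-concentration of $\bX_{\set{S}}^{\top}w$ cannot carry any information about it: your conditional bound $\bigl(\tau\sqrt{n/s}/(\sigma\|w\|_2)\bigr)^{s}$ with $\tau\lsim\gamma$ is $\ge 1$ precisely on the event $\|w\|_2\le\gamma$ (since $\sigma\le1$ and $n\ge s$), i.e.\ it is vacuous exactly where it is needed; the deterministic step $\mathrm{dist}\ge\|\bX_{\set{S}}^{\top}w\|_2/\|\bX_{\set{S}}\|_2$ is just the operator-norm inequality and adds nothing. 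The ``circularity'' you flag is this independence in disguise, and no choice of threshold $w_0$ repairs it: if $w_0>\gamma$ the regime $\|w\|_2\ge w_0$ contributes nothing to the event, so the whole burden falls on your crude estimate, which for failure probability $\delta$ only yields a radius of order $\sigma\delta/d$ --- strictly weaker than the claimed $\sigma\delta^{1/s}/\sqrt{dn}$ in the regime of small $\delta$. The $\delta^{1/s}$ dependence is the entire point of the lemma (it is what survives the later union bound over ${d\choose s}$ sets and $1/\dlam$ intervals), and your scheme cannot produce it.

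The paper extracts the exponent $s$ from randomness \emph{inside} $\bX_{\set{\bar{S}}}$ itself. If $\|\bX_{\set{\bar{S}}}v-y\|_2\le\gamma$ while $\|\bX\|_2\lsim1$ and $\|y\|_2=1$, then $\|v\|_2\ge1-\gamma$, so some coordinate $i$ (indexing a column of $\bX_{\set{\bar{S}}}$) has $|v_i|\ge\frac{1}{2\sqrt{d}}$. Condition on every column except $\bX_i$ and pick orthonormal vectors $u_1,\dots,u_{s-1}$ orthogonal to $y$, to $(\bhX)_i$, and to the remaining columns of $\bX_{\set{\bar{S}}}$; this is possible because $n\ge d$ leaves codimension at least $n-(d-s+1)\ge s-1$, and these $u_j$ do not depend on $\bG_i$. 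Projecting $\bX_{\set{\bar{S}}}v-y$ onto each $u_j$ annihilates everything except $\langle u_j,\bG_i\rangle v_i$, forcing $|\langle u_j,\bG_i\rangle|\le2\sqrt{d}\,\gamma$ simultaneously for $s-1$ independent one-dimensional Gaussians, an event of probability $\lsim(\sqrt{dn}\,\gamma/\sigma)^{s-1}$; a union bound over the choice of $i$ and the choice $\gamma=\Omega(\sigma\delta^{1/s}/\sqrt{dn})$ finish the proof. Your geometric picture of the $\ge s$-dimensional orthogonal complement of $\mathrm{col}(\bX_{\set{\bar{S}}})$ is the right object, but it must be probed with the conditional randomness of a single column of $\bX_{\set{\bar{S}}}$, not with the independent columns $\bX_{\set{S}}$, which the event never sees.
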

\noindent For brevity, we denote the event above by $\set{D}_{\gamma}$ where
$\gamma = \Omega\left(\frac{\sigma\delta^{\frac1s}}{\sqrt{dn}} \right)$.
\begin{proof}
The proof relies on the following simple fact about Gaussian random
variables. For any unit vector $u$, scalar $\tau > 0$, $g \in \mathbb{R}$, and $i \in [d]$
the following holds,
\begin{align}
\Pr\big[\left|\langle u, \bX_i \rangle - g \right| \leq \tau \big]
    \leq \frac{e \tau \sqrt{n}}{\sigma} ~ ,
  \label{lem:div}
\end{align}
where $\bX_i$ is a (column) vector with elements distributed i.i.d according
to ${\cal N}(0,\sigma^2/n)$.

Let us assume that there exists $v \in \mathbb{R}^{d - s}$ such that
$\left\|\bX_{\set{\bar{S}}} v - y \right\|_2 \leq \gamma$.  From
Lemma~\ref{lem:alpha}, we know that with probability of at least
$1 - \delta$, $ \|\bX\|_2  \lsim 1 $. Since we assume that $\|y\|_2=1$,
it holds that
\begin{align*}
\gamma &
  \geq \left\|\bX_{\set{\bar{S}}} v - y \right\|_2
  \geq \|y\|_2 - \|\bX v\|_2
  \geq 1 - \|v\|_2 ~ ~ \Rightarrow ~ ~
  \| v\|_2 \geq 1-\gamma ~ .
\end{align*}
Since $\gamma$ is smaller than $1/2$,  there must exist a coordinate $i$ for which
$| v_i| \geq \frac{1}{2 \sqrt{d}}$.

Now, let us fix $\bG_j$ to its observed values for all $j \not= i$. In
addition, let us denote by $\bU$ the subspace spanned by
$$\{(\bhX)_i\} \cup \{\bX_j \mid \forall j \not= i,j \in\set{S}\} \cup \{y\} ~.$$
We know that $\bU $ is of dimension $d - s + 1$. Consider an orthonormal basis
for $\{u_1, \cdots, u_{s - 1} \} \in \mathbb{R}^n$ for the subspace  $\mathrm{span}(\{\bX_i\}) - \bU$. Suppose there
  exists $v$ such that $ \left\|\bX_{\set{\bar{S}}} v - y \right\|_2 \leq \gamma$.
Then, for every $j \in [s-1]$, multiplying by $u_j$ yields
\begin{align*}
\big| \langle u_j, \bX_i \rangle v_i \big| \leq \gamma ~ \Rightarrow ~
| \langle u_j, \bX_i \rangle  | \leq \frac{\gamma}{|v_i|} \leq 2 \sqrt{d} \gamma ~ .
\end{align*}
Note that any pair $j\neq\ j'$, the inner products $ \langle u_j, \bX_i \rangle $  and
$ \langle u_{j'}, \bX_i \rangle $ are independent. We now use \eqref{lem:div}
over all $u_j$ and get that  $\set{D}_{\gamma}$ holds with probability of at most,
$$
\left( \frac{6\sqrt{dn}\gamma}{\sigma} \right)^{\large s} ~ .
$$
Finally, choosing
$\gamma = \Omega\Big(\frac{\sigma \delta^{\frac 1 s}}{\sqrt{dn}}\Big)$
completes the proof.
\end{proof}

\section{Coupling the solutions}\label{sec:3}
In this section, we show that when the total number of linear segments in
a small interval is excessively large, the optimal solution $w[\lambda]$
can be coupled with the optimal solution $v[\lambda]$ of the constrained
Lasso problem of \eqref{modlasso:eqn}.

\paragraph{Sign changes.}
For a given fixed $\lambda_0>0$ and $\dlam > 0$, let us denote by $\nsc(i)$
the number of times that the generalized sign of $w_i$ changes as $\lambda$
increases from $\lambda_0$ to  $\lambda_0 + \dlam$. Thus, the total number
of linear segments in the interval $[\lambda_0, \lambda_0 + \dlam]$ is at
least $\sum_{i = 1}^d \nsc(i) $. We prove the following lemmas related to
the sign changes.

\begin{lem}[Number of sign changes]
For any integer $N > 0$, any $\lambda_0, \dlam > 0$, if
$\sum_{i=1}^d\nsc(i)\geq{}N$, then there exists at least $\log_3 (N)$ many
indices $j \in [d]$ such that $\nsc(j) \geq 1$.
\end{lem}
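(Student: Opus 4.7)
The plan is to double-bound the number of linear segments in the interval $[\lambda_0,\lambda_0+\dlam]$. Setting $\set{S}=\{j\in[d] : \nsc(j)\geq 1\}$ and $k:=|\set{S}|$, I would show that this number is at most $3^k$ by a sign-vector counting argument and at least $\sum_i\nsc(i)\geq N$ by the observation just above the lemma, so that $3^k\geq N$, i.e.\ $k\geq\log_3 N$.

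For the upper bound, observe that by the very definition of $\set{S}$, the generalized sign $\sign(w_j[\lambda])$ is constant on the interval for every $j\notin\set{S}$. Hence the sign vector $\sign(w[\lambda])$ is fully determined by its $k$ coordinates in $\set{S}$, each of which takes values only in $\{-1,0,+1\}$. Thus at most $3^k$ distinct sign vectors can appear in the interval, and Lemma~\ref{lem:piecewise_linear} forces distinct linear segments to carry distinct sign vectors, yielding $\#\text{segments}\leq 3^k$.

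For the matching lower bound I would invoke the observation made in the paragraph preceding the lemma: $\#\text{segments}\geq\sum_{i=1}^d\nsc(i)$, which reflects the generic fact that each individual coordinate sign change creates its own new segment. Chaining the two inequalities gives $N\leq\sum_i\nsc(i)\leq\#\text{segments}\leq 3^k$, and rearranging yields $k\geq\log_3 N$. The single point worth care is the lower bound: strictly it is tight only when no two coordinates flip sign at exactly the same $\lambda$, a generic-position condition that holds almost surely under the Gaussian perturbation of Assumption~\ref{ass:1}, so the clean bound applies throughout the paper's setting. (Even in a worst case with simultaneous flips one would still get $k\geq\log_3 N-\log_3 k$, which is only a lower-order loss.)
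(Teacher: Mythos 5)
Your proof is correct and takes essentially the same route as the paper: the paper's argument is precisely that distinct linear segments carry distinct sign vectors (Lemma~\ref{lem:piecewise_linear}), and since only the coordinates in $\set{S}=\{j:\nsc(j)\geq 1\}$ can vary, at most $3^{|\set{S}|}$ sign patterns are available, so $3^{|\set{S}|}\geq N$ by pigeonhole. Your added caveat about simultaneous sign flips affecting the bound $\#\text{segments}\geq\sum_i\nsc(i)$ is a fair observation that the paper glosses over, but it does not constitute a different approach.
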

\begin{proof}
According to Lemma~\ref{lem:piecewise_linear}, each linear
segment is associated with a {\em unique} sign pattern in $\{-1, 0, 1\}^d$.
Since there are $N$ segments, the pigeon hole principle implies that there
must exist at least $\log_3(N)$ many coordinates of $w[\lambda]$ that change
their sign in this interval.
\end{proof}
\noindent
From the Lipschitzness of $w[\lambda]$ and $u[\lambda]$, we also obtain the
following lemma.
\begin{lem}[Sign change $\Rightarrow$ small weight]
For $i \in [d]$, if $\nsc(i) \geq 1$, then following properties hold:
$ \left| w_i[\lambda_0]\right| \leq \Lw \dlam$ and
$ \left| |u_i[{\lambda_0}]| - \lambda_0 \right| \leq (\Lu+1) \dlam$.
\end{lem}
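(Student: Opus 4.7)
The plan is to derive both bounds directly from the Lipschitz estimates for $w[\lambda]$ and $u[\lambda]$ combined with the KKT characterization of Lemma~\ref{lem:opt}, using the hypothesis $\nsc(i)\geq 1$ only to exhibit a ``witness'' value of $\lambda$ inside $[\lambda_0,\lambda_0+\dlam]$ at which something concrete can be said. Different witnesses are needed for the two bounds, and this is really the only structural observation in the proof.

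For the $w$-bound, I would first note that a change in the generalized sign of $w_i$ on $[\lambda_0, \lambda_0+\dlam]$ forces the existence of some $\lambda^\star$ in this interval with $w_i[\lambda^\star] = 0$. Indeed, $w_i[\lambda]$ is continuous (piecewise linear, by Lemma~\ref{lem:piecewise_linear}), so if its generalized sign takes two different values at two points of the interval, then either one of those points is already a zero of $w_i$, or the Intermediate Value Theorem produces one in between. Lipschitzness of $w_i$ then immediately gives $|w_i[\lambda_0]| = |w_i[\lambda_0] - w_i[\lambda^\star]| \leq \Lw\,|\lambda_0-\lambda^\star|\leq \Lw \dlam$.

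For the $u$-bound I would instead exhibit a witness $\lambda^\sharp\in[\lambda_0,\lambda_0+\dlam]$ at which $w_i[\lambda^\sharp]\neq 0$. Such a $\lambda^\sharp$ must exist because the sign change involves two \emph{distinct} generalized-sign values, so at least one of the two witnessing $\lambda$'s satisfies $w_i\neq 0$ (otherwise both signs would equal $0$ and coincide). At this $\lambda^\sharp$, Lemma~\ref{lem:opt} yields the \emph{equality} $u_i[\lambda^\sharp] = -\lambda^\sharp\,\sign(w_i[\lambda^\sharp])$, hence $|u_i[\lambda^\sharp]| = \lambda^\sharp$ exactly. Combining the Lipschitzness of $u_i$, the reverse triangle inequality for absolute values, and $|\lambda^\sharp - \lambda_0|\leq \dlam$ then bounds $\big||u_i[\lambda_0]|-\lambda_0\big|$ by $\Lu\dlam + |\lambda^\sharp-\lambda_0|\leq \Lu\dlam + \dlam = (\Lu+1)\dlam$.

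I do not anticipate a real obstacle here; the only subtlety worth highlighting is that the two parts require different kinds of witnesses (a zero of $w_i$ for the first, a nonzero of $w_i$ for the second), and that the KKT conditions enter as an \emph{equality} rather than the weaker inclusion $u_i\in[-\lambda,\lambda]$ precisely at a point where $w_i\neq 0$---which is what makes the second bound scale as $(\Lu+1)\dlam$ rather than being vacuous.
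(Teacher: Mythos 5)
Your proposal is correct and takes essentially the same route as the paper: both rest on the Lipschitz bounds for $w_i$ and $u_i$, the KKT equality $|u_i[\lambda]|=\lambda$ from Lemma~\ref{lem:opt}, and the same triangle-inequality chain, with a witness $\lambda$ in $[\lambda_0,\lambda_0+\dlam]$ supplied by the sign change. The only difference is cosmetic: the paper uses a single witness at which $w_i[\lambda]=0$ and $|u_i[\lambda]|=\lambda$ hold simultaneously, whereas you split this into two witnesses (a zero of $w_i$ for the first bound, a nonzero point for the second), which if anything spells out the existence argument a bit more carefully and yields the identical bounds.
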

\begin{proof}
Since $\nsc(i) \geq 1$, we know that there exists  $\lambda \in
[\lambda_0, \lambda_0 + \dlam]$ such that $w_i[\lambda] = 0$ and
$\left|u_i[{\lambda}]\right|  = \lambda$. Using Lipschitzness of $w[\lambda]$
we get
$|w_i[\lambda_0]-w_i[\lambda]| = |w_i[\lambda_0]| \leq \Lw\dlam$. For
$u_i[\lambda_0]$ we have,
\begin{align*}
||u_i[\lambda_0] |- \lambda_0| &\leq
  ||u_i[\lambda_0]| - |u_i[\lambda] ||  +  | |u_i[\lambda] | - \lambda | + | \lambda - \lambda_0|
 \leq ( \Lu + 1) \dlam
\end{align*}
which concludes the proof.
\end{proof}

We use $\tilde{\Lu}$ in the sequel as a shorthand for $\Lu+1$.  Based on the
two lemmas above we readily get the following corollary.
\begin{cor} \label{cor:event}
For any integer $N > 0$ and $\dlam, \lambda_0 \geq 0$, if
$\sum_{i} \nsc(i) \geq N$, then there exists a subset $\set{S} \subseteq [d]$ of
cardinality at least $\log_3 (N)$ such that $\forall i \in \set{S}$,
\begin{align*}
\big|w_i[\lambda_0]\big| \leq \Lw \dlam & ~~ \mbox{and} ~~
\big| |u_i[{\lambda_0}]| - \lambda_0 \big| \leq \tilde{\Lu} \dlam ~.
\end{align*}

\end{cor}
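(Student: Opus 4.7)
The statement is a direct packaging of the two preceding lemmas, so my plan is short and assembly-style.

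First I would invoke the ``Number of sign changes'' lemma with the given $N$, $\lambda_0$, and $\dlam$. Its hypothesis $\sum_{i=1}^d \nsc(i) \geq N$ is exactly what we assume, so it yields at least $\log_3(N)$ indices $j \in [d]$ with $\nsc(j)\geq 1$. I would let $\set{S}$ be any such collection of indices, so that by construction $|\set{S}|\geq \log_3(N)$ and $\nsc(i)\geq 1$ for every $i \in \set{S}$.

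Next, for each fixed $i \in \set{S}$ I would apply the ``Sign change $\Rightarrow$ small weight'' lemma. Its two conclusions are precisely $|w_i[\lambda_0]| \leq \Lw \dlam$ and $\bigl| |u_i[\lambda_0]| - \lambda_0 \bigr| \leq (\Lu+1)\dlam$, stated at the same reference point $\lambda_0$. Using the shorthand $\tilde{\Lu} = \Lu + 1$ introduced in the paragraph immediately above the corollary, the second inequality becomes $\tilde{\Lu}\dlam$, which matches the statement of the corollary. Applying this lemma coordinate-wise across $\set{S}$ delivers both bounds simultaneously on the whole set.

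There is essentially no obstacle here: the first lemma supplies a set of indices of the required cardinality, and the second lemma supplies the two bounds for each of those indices. The only bookkeeping steps are fixing a choice of $\set{S}$ (any sufficiently large subset of $\{i : \nsc(i)\geq 1\}$ works) and the rewriting $\Lu + 1 \mapsto \tilde{\Lu}$. In particular, no additional Lipschitz extrapolation is needed beyond what the second lemma already performs internally, and the interval $[\lambda_0,\lambda_0+\dlam]$ enters only implicitly through the existence of a sign-change point inside it, which is already absorbed into the second lemma.
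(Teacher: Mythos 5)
Your proposal is correct and matches the paper's own reasoning: the corollary is stated as an immediate consequence of the two preceding lemmas, obtained exactly as you do by taking $\set{S}$ to be the $\log_3(N)$ indices with $\nsc(i)\geq 1$ and applying the sign-change lemma coordinate-wise, with $\tilde{\Lu}=\Lu+1$ as notation.
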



\paragraph{Rare events.}
Let $\set{S}$ be defined as in Corollary~\ref{cor:event}, we next show that if
the size of $\set{S}$ is too large, then certain rare couplings would take
place. Thus, the size of $\set{S}$ is likely to be small with high
probability.  Throughout the rest of the paper we {\em overload} notation and
denote by $\bX_{\set{S}}\in\reals^{n\times{}d}$ the matrix where each column
$\bX_i$, for $i\not\in\set{S}$, is replaced with the zero
vector. The matrix $\bX_{\set{\bar{S}}}$ is defined analogously. Note that
by definition, $\bX_{\set{S}} + \bX_{\set{\bar{S}}}= \bX $.

\begin{lem}[Size of $\set{S}$]\label{lem:event}
For any fixed $\lambda_0$, $\dlam > 0$, and a set $\set{S} \subseteq [d]$,
let $\bX_{\set{\bar{S}}} \in \mathbb{R}^{n \times d}$ be defined as above.
Let $v_{ \set{\bar{S}}}[\lambda_0]$ be the minimizer,
$$v_{ \set{\bar{S}}}[\lambda_0] = \arg\min_{v\in\reals^d}
  \frac12 \| \bX_{\set{\bar{S}}}\, v- y \|_2^2 + \lambda_0 \| v \|_1$$
Assume that the properties of Corollary~\ref{cor:event}
hold for a set $\set{S}$. Then, for every $j \in \set{S}$ the following
inequality holds,
\begin{align*}
\left|\left|
  \bX_j^{\top} (\bX_{\set{\bar{S}}}\, v_{ \set{\bar{S}}}[\lambda_0] - y )
    \right|- \lambda_0 \right|
&\leq
  2\sqrt{|\set{S}|}\,\|\bX\|_2^2
  \left(\frac{\Lw \| \bX\|_2^2 \dlam}{\alpha^2} + \tilde{\Lu} \dlam \right) ~ .
\end{align*}
We refer to this event as $\set{E}_{\set{S}}^{(\tau)}$ with parameter
$\tau = 2\sqrt{|\set{S}|} \|\bX\|_2^2
\left(\frac{\Lw \| \bX \|_2^2 \dlam}{\alpha^2} + \tilde{\Lu} \dlam \right)$.
\end{lem}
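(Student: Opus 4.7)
My plan is to reduce the claim to a bound on $\|v_{\set{\bar S}}[\lambda_0] - w[\lambda_0]\|_2$ via the reverse triangle inequality, and to control that distance using strong convexity of the Lasso objective together with the KKT identity for $w[\lambda_0]$. Throughout, write $w = w[\lambda_0]$ and $v = v_{\set{\bar S}}[\lambda_0]$. First observe that $v_i = 0$ for every $i \in \set{S}$: the columns of $\bX_{\set{\bar{S}}}$ indexed by $\set{S}$ are identically zero, so the $\ell_1$ penalty forces those coordinates to vanish. Hence $\bX_{\set{\bar{S}}}\,v = \bX v$, and
\[
\bX_j^\top(\bX_{\set{\bar{S}}}\,v - y) = u_j[\lambda_0] + \bX_j^\top \bX (v - w) .
\]
Combining $|\bX_j^\top \bX(v-w)| \leq \|\bX\|_2^2\,\|v-w\|_2$ with the bound $\bigl||u_j[\lambda_0]|-\lambda_0\bigr|\leq\tilde{\Lu}\dlam$ from Corollary~\ref{cor:event} yields
\[
\bigl| |\bX_j^\top(\bX_{\set{\bar{S}}}\,v-y)|-\lambda_0\bigr| \leq \|\bX\|_2^2\,\|v-w\|_2 + \tilde{\Lu}\dlam ,
\]
so the task reduces to bounding $\|v-w\|_2$ by roughly $\sqrt{|\set{S}|}\,\|\bX\|_2^2\Lw\dlam/\alpha^2$.

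For this, let $\tilde w$ be the projection of $w$ onto $\set{V}=\{z:z_i=0~\forall i\in\set{S}\}$, so $\|\tilde w - w\|_2 \leq \sqrt{|\set{S}|}\,\Lw\dlam$ by the first inequality of Corollary~\ref{cor:event}. Because $\bX z = \bX_{\set{\bar{S}}}\,z$ on $\set{V}$, the minimizer $v$ coincides with the minimizer of the full Lasso objective $F(z) = \tfrac12\|\bX z - y\|_2^2 + \lambda_0\|z\|_1$ restricted to $\set{V}$. Strong convexity of $F$ at its global minimizer $w$ (here $0 \in \partial F(w)$), together with the feasibility of $\tilde w \in \set{V}$ for the restricted problem, gives
\[
\tfrac{\alpha^2}{2}\|v-w\|_2^2 \leq F(v) - F(w) \leq F(\tilde w) - F(w) .
\]
A direct expansion, using $\bX\tilde w = \bX w - \bX w_{\set{S}}$ (with $w_{\set{S}}$ denoting the vector having entries $w_i$ for $i\in\set{S}$ and zero elsewhere) and $\bX^\top(\bX w - y) = u[\lambda_0]$ from Lemma~\ref{lem:opt}, gives
\[
F(\tilde w) - F(w) = -\sum_{i\in\set{S}}\bigl(u_i[\lambda_0]\,w_i + \lambda_0|w_i|\bigr) + \tfrac12\|\bX w_{\set{S}}\|_2^2 .
\]
KKT complementary slackness forces $u_i[\lambda_0]\,w_i + \lambda_0|w_i| = 0$ in every case (when $w_i\neq 0$, $u_i[\lambda_0] = -\lambda_0\sign(w_i)$; when $w_i = 0$, trivially), so only the quadratic remainder survives, and $F(\tilde w) - F(w) \leq \tfrac12 \|\bX\|_2^2 |\set{S}|\Lw^2\dlam^2$. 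Substituting into the strong-convexity bound and using $\alpha \leq \|\bX\|_2$ delivers $\|v-w\|_2 \leq \|\bX\|_2^2 \sqrt{|\set{S}|}\,\Lw\dlam/\alpha^2$; plugging this into the first step produces the claimed $\tau$ (with the outer factor of $2$ absorbing the residual $\tilde{\Lu}\dlam$).

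The main obstacle is keeping $F(\tilde w) - F(w)$ tight enough. A generic first-order Lipschitz estimate would leave a $\lambda_0$-sized linear term in $w_{\set{S}}$, yielding a bound of order $\lambda_0\sqrt{|\set{S}|}\Lw\dlam$ that would ruin the dependence on $\Lw$ (and saddle the estimate with an extra $\lambda_0$). The crucial trick is that the KKT identity at the unconstrained optimum $w$ exactly cancels this linear term; what remains is the second-order quadratic $\tfrac12\|\bX w_{\set{S}}\|_2^2$, which is of order $\|\bX\|_2^2 |\set{S}|\Lw^2\dlam^2$ and therefore matches the quadratic form $\tfrac{\alpha^2}{2}\|v-w\|_2^2$ on the left-hand side, giving the desired $\sqrt{|\set{S}|}$ scaling after taking square roots.
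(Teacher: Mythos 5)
Your proof is correct, and it reaches the key coupling bound by a genuinely different mechanism than the paper. The paper compares $v_{\set{\bar{S}}}[\lambda_0]$ with $w_{\set{\bar{S}}}[\lambda_0]$ (the restriction of $w[\lambda_0]$ to $\set{\bar{S}}$): it observes that $w_{\set{\bar{S}}}[\lambda_0]$ minimizes $h(w)=\frac12\|\bX_{\set{\bar{S}}}w+\bX_{\set{S}}w_{\set{S}}[\lambda_0]-y\|_2^2+\lambda_0\|w\|_1$ over the coordinate subspace, views $h-g$ (with $g(w)=\frac12\|\bX_{\set{\bar{S}}}w-y\|_2^2+\lambda_0\|w\|_1$) as a Lipschitz perturbation with constant $\|\bX\|_2^2\|w_{\set{S}}[\lambda_0]\|_2$, and invokes Lemma~\ref{lem:pscf_II} to get $\|v_{\set{\bar{S}}}[\lambda_0]-w_{\set{\bar{S}}}[\lambda_0]\|_2\leq \|\bX\|_2^2\sqrt{|\set{S}|}\,\Lw\dlam/\alpha^2$, followed by a second triangle-inequality step back to $w[\lambda_0]$. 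You instead couple $v_{\set{\bar{S}}}[\lambda_0]$ directly to $w[\lambda_0]$ through function values of the full objective $F$: strong convexity at the global minimizer gives $\frac{\alpha^2}{2}\|v-w\|_2^2\leq F(\tilde{w})-F(w)$ for your surrogate $\tilde{w}=w[\lambda_0]-w_{\set{S}}[\lambda_0]$, and the KKT identity $u_i[\lambda_0]w_i[\lambda_0]+\lambda_0|w_i[\lambda_0]|=0$ cancels the linear terms exactly, leaving only the quadratic $\frac12\|\bX w_{\set{S}}[\lambda_0]\|_2^2$. This cancellation is the genuinely new ingredient relative to the paper's argument; it buys a slightly sharper coupling, $\|v-w\|_2\leq \|\bX\|_2\sqrt{|\set{S}|}\,\Lw\dlam/\alpha$ rather than $\|\bX\|_2^2/\alpha^2$, and removes the intermediate comparison with $w_{\set{\bar{S}}}[\lambda_0]$, whereas the paper's route has the advantage of reusing its general perturbation lemma unchanged. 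One minor caveat, which the paper's own final chain of inequalities shares: absorbing your residual $\tilde{\Lu}\dlam$ term into the stated $\tau$ implicitly uses $2\sqrt{|\set{S}|}\,\|\bX\|_2^2\geq 1$, which is harmless in the smoothed setting but worth stating.
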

\begin{proof}
We know that the $i$'th coordinate of $v_{\set{\bar{S}}}[\lambda_0]$ is
zero for all $i \in \set{S}$. Therefore, we need to focus solely on
the set of vectors $v$ which are in
$$\set{K} \eqdef
  \{ v \in \mathbb{R}^d \mid \forall i \in \set{S}, v_i = 0 \} ~ .$$
Since by definition $\bX_{\set{S}} = \bX - \bX_{\set{\bar{S}}}$
we can rewrite the original objective as,
$$\frac12 \big\| \bX_{\set{\bar{S}}} \, w + \bX_{\set{S}} \, w\,-\, y \big\|_2^2 +
  \lambda_0 \| w \|_1 ~ .$$
Let $w_{\set{\bar{S}}}[\lambda_0] \in \mathbb{R}^d$ be a vector whose
$i^{\textrm{th}}$ coordinate is the $i^{\textrm{th}}$ coordinate of
$w[\lambda_0]$ for $i \notin S$ and is zero otherwise and let
$w_{\set{S}}[\lambda_0]  = w - w_{\set{\bar{S}}}[\lambda_0] $. From the
optimality of $w[\lambda_0]$, we know that $w_{\set{\bar{S}}}[\lambda_0]$
is the minimizer of
$$h(w) =
  \frac12 \big\|\bX_{\set{\bar{S}}}\, w + \bX_{\set{S}} \,w_{\set{S}}[\lambda_0] \,-\, y\big\|_2^2
    + \lambda_0 \| w \|_1 \quad \mbox{s.t. } w \in \set{K} ~.$$
Let $g(w) \eqdef \frac12 \big\|\bX_{\set{\bar{S}}}\, w\,-\,y\big\|_2^2 + \lambda_0\|w\|_1$.
Expanding terms we get that for every $w \in \set{K}$,
\begin{align*}
h(w) - g(w)  &= \frac12 \left\| \bX_{\set{S}}\, w_{\set{S}}[\lambda_0]  \right\|_2^2
  - \langle  \bX_{\set{S}}\, w_{\set{S}}[\lambda_0] , y \rangle
  + \left\langle \bX_{\set{S}} w_{\set{S}}[\lambda_0] ,
    \bX_{\set{\bar{S}}} \, w  \right\rangle ~ ,
\end{align*}
and the gradient of $h(w) - g(w)$ satisfies,
\begin{align*}
\left\|\nabla\big((h(w) - g(w)\big) \right\|_2 &\leq
  \left\|\bX_{\set{\bar{S}}}^{\top}\, \bX_{\set{S}}\, w_{\set{S}}[\lambda_0]\right\|_2
   \; \leq \|\bX\|_2^2 \, \big\|w_{\set{S}}[\lambda_0]\big\|_2 ~ .
\end{align*}
From our assumption that Corollary~\ref{cor:event} holds for $\set{S}$, we know
that $\|w_{\set{S}}[\lambda_0]  \|_{\infty} \leq  \Lw \dlam$ which in turn
implies that $ \|w_{\set{S}}[\lambda_0]  \|_2 \leq \sqrt{|\set{S}|} \Lw \dlam$.

\smallskip

We can now apply Lemma~\ref{lem:pscf_II} w.r.t $g(w)$ and $h(w) - g(w)$ to conclude that
$$\left\| v_{\set{\bar{S}}} [\lambda_0] - w_{\set{\bar{S}}}[\lambda_0] \right\|_2
  \leq \|\bX\|_2^2  \frac{\sqrt{|\set{S}|} \Lw \dlam}{\alpha^2}~.$$
Therefore, for every $j \in \set{S}$ the following holds
\begin{align*}
\lefteqn{\Big| \left| \bX_j^{\top}
  (\bX_{\set{\bar{S}}}\, v_{\set{\bar{S}}}[\lambda_0] \,-\, y) \right|- \lambda_0 \Big|}
\\
  & \leq \Big| \left| \bX_j^{\top} (\bX_{\set{\bar{S}}}\, w_{ \set{\bar{S}}}[\lambda_0] - y )  \right|
  -  \lambda_0 \Big|  +  \| \bX \|_2^4   \frac{\sqrt{|\set{S}|} \Lw \dlam}{\alpha^2}
\\
  & \leq \Big| \left| \bX_j^{\top} (\bX_{\set{\bar{S}}} \, w[\lambda_0] - y )  \right|-  \lambda_0 \Big|  \nonumber
~+   \| \bX \|_2^2  \left(\frac{\sqrt{|\set{S}|}\, \| \bX \|_2^2 \, \Lw \,\dlam}{\alpha^2}  +
  \sqrt{|\set{S}|}\, \left\|w_{\set{S}}[\lambda_0] \right\|_{\infty}  \right)
\\
& \leq 2\sqrt{|\set{S}|} \, \| \bX \|_2^2 \, \left(\frac{\Lw\, \| \bX \|_2^2 \,\dlam}{\alpha^2} +
\tilde{\Lu} \dlam \right) ~,
\end{align*}
which concludes the proof.
\end{proof}

\smallskip

\noindent
Using again that $\|\bX\|_2 \lsim 1$ we obtain
\begin{align} \label{eqn:taunu}
\tau &\lsim
  \nu \sqrt{|\set{S}|} \left(\frac{\Lw}{\alpha^2} + \tilde{\Lu} \right) ~.
\end{align}
This means that of gradient of objective scales as the product
of the root of the size of $\set{S}$ and the length of the interval $\dlam$.

\paragraph{Bounding the probability of bad events}
Next we show that for every fixed set $\set{S}$ of sufficiently large
cardinality and sufficiently small $\dlam$, $\set{E}_{\set{S}}^{(\tau)}$
holds with very small probability if $\bX$ satisfies the smoothness
assumption.

\medskip

\begin{lem}[Smoothing] \label{lem:smoothed}
For any fixed $\lambda_0$, $\dlam > 0$, $\tau \geq 0$, and
$\set{S} \subseteq [d]$, let us decompose $\bG$ into
  $\bG = \bG_{\set{\bar{S}}} + \bG_{\set{S}}$ as in Lemma~\ref{lem:event}.
Then, the following inequality holds,
$$\Pr_{\bG_{\set{S}}}
  \left[\set{E}_{\set{S}}^{(\tau)} \big| \bG_{\set{\bar{S}}}\right] \leq
  \left(\frac
  {e\tau \sqrt{n}}
  {\sigma \big\| \bX_{\set{\bar{S}}}\, v_{\set{\bar{S}}}[\lambda_0] - y  \big\|_2}
    \right)^{|\set{S}|} ~ . $$
\end{lem}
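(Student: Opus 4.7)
The plan is to exploit the randomness of $\bG_{\set{S}}$, which is independent of everything used to define $v_{\set{\bar{S}}}[\lambda_0]$. The key observation is that $v_{\set{\bar{S}}}[\lambda_0]$ is the minimizer of a Lasso problem whose objective uses only the columns $\bX_i$ for $i \notin \set{S}$ (the remaining coordinates are forced to zero). Hence $v_{\set{\bar{S}}}[\lambda_0]$ is a measurable function of $\bG_{\set{\bar{S}}}$ alone, and conditioned on $\bG_{\set{\bar{S}}}$ the residual $r := \bX_{\set{\bar{S}}}\, v_{\set{\bar{S}}}[\lambda_0] - y$ is a fixed (non-random) vector in $\mathbb{R}^n$.

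Next I would analyze a single coordinate $j \in \set{S}$. Writing $\bX_j = (\bhX)_j + \bG_j$ where $\bG_j \sim \mathcal{N}(0,\tfrac{\sigma^2}{n} I_n)$, the random variable $\bX_j^{\top} r$, conditioned on $\bG_{\set{\bar{S}}}$, is Gaussian with mean $(\bhX)_j^{\top} r$ and variance $\sigma^2 \|r\|_2^2 / n$. The event $\bigl||\bX_j^{\top} r| - \lambda_0\bigr| \leq \tau$ is equivalent to $\bX_j^{\top} r$ lying in $[-\lambda_0 - \tau, -\lambda_0 + \tau] \cup [\lambda_0 - \tau, \lambda_0 + \tau]$, a set of total Lebesgue measure $4\tau$. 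Upper bounding the Gaussian density by its peak value $\sqrt{n}/(\sqrt{2\pi}\,\sigma\|r\|_2)$ gives
\[
\Pr_{\bG_j}\!\left[\bigl||\bX_j^{\top} r| - \lambda_0\bigr| \leq \tau \,\Big|\, \bG_{\set{\bar{S}}}\right] \;\leq\; \frac{4\tau\sqrt{n}}{\sqrt{2\pi}\,\sigma\|r\|_2} \;\leq\; \frac{e\,\tau\sqrt{n}}{\sigma\|r\|_2},
\]
using $4/\sqrt{2\pi} < e$.

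Finally, the columns $\bG_j$ for $j \in \set{S}$ are mutually independent and independent of $\bG_{\set{\bar{S}}}$, and $r$ is deterministic given $\bG_{\set{\bar{S}}}$. Therefore the $|\set{S}|$ per-coordinate events comprising $\set{E}_{\set{S}}^{(\tau)}$ are conditionally independent given $\bG_{\set{\bar{S}}}$, and multiplying the single-coordinate bounds yields the claimed product with exponent $|\set{S}|$.

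The substantive content here is the anti-concentration step, and the only thing requiring care is the measurability observation in the first paragraph — without the constraint that forces the $\set{S}$-coordinates of $v_{\set{\bar{S}}}[\lambda_0]$ to vanish, the residual $r$ would depend on $\bG_{\set{S}}$ and the whole factorization argument would collapse. Once that is verified, nothing else is delicate: the Gaussian density bound and independence across columns close out the proof.
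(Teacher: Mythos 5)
Your proof is correct and follows essentially the same route as the paper: condition on $\bG_{\set{\bar{S}}}$ so that the residual $\bX_{\set{\bar{S}}}\, v_{\set{\bar{S}}}[\lambda_0] - y$ is fixed, apply Gaussian anti-concentration to $\bX_j^{\top} r$ for each $j \in \set{S}$, and multiply using conditional independence of the columns $\bG_j$. If anything, you are slightly more careful than the paper, which invokes its one-interval anti-concentration fact while the event $\bigl||\bX_j^{\top} r| - \lambda_0\bigr| \leq \tau$ really concerns two intervals; your explicit $4\tau$-measure bound with $4/\sqrt{2\pi} < e$ closes that small gap.
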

\begin{proof}
Consider the vector $v_{\set{\bar{S}}}[\lambda_0]$ defined as in
Lemma~\ref{lem:event}. We know that $v_{\set{\bar{S}}}[\lambda_0] $
depends only on $ \bG_{\set{\bar{S}}} $ but not on $\bG_{\set{S}}$. Therefore,
for any fixed $ \bG_0$, by conditioning on $ \bG_{\set{\bar{S}}} = \bG_0$, we
get that for all $j \in \set{S}$
\begin{align*}
  &\Pr_{\bG_{\set{S}}} \Big[ \left|\,\left| \bX_j^{\top}
  (\bX_{\set{\bar{S}}} v_{ \set{\bar{S}}}[\lambda_0] - y )  \right|-
  \lambda_0 \right| \leq \tau \Big]
\\
& = \Pr_{\bG_{\set{S}}} \left[ \left| \left| \left((\bhX)_j + \bG_j
  \right)^{\top} (\bX_{\set{\bar{S}}} v_{ \set{\bar{S}}}[\lambda_0] - y )  \right|-  \lambda_0 \right| \leq \tau\right]
\\
& = \Pr_{\bG_j} \left[ \left| \left| \left((\bhX)_j + \bG_j \right)^{\top}
  (\bX_{\set{\bar{S}}} v_{ \set{\bar{S}}}[\lambda_0] - y )  \right|-  \lambda_0 \right| \leq \tau\right]
\\
  & \leq \frac{e\tau \sqrt{n}}
    { \sigma \big\| (\bX_{\set{\bar{S}}} v_{ \set{\bar{S}}}[\lambda_0] - y ) \big\|_2 } ~ .
\end{align*}
Since the inequality holds for all $j \in \set{S}$ and any $\bG_0$ the proof
is completed.
\end{proof}

\section{Proof of Theorem~\ref{thm:tec_main}}
Recall that we assume that the target vector is of unit norm $\|y\|_2 = 1$.
We slightly overload notation and denote by $\alpha(\bX)$ the smallest right
singular value of $\bX$. From the optimality of $w[\lambda]$, we know that
\begin{align*}
\frac{1}{2} \big\|\bX w[\lambda] - y\big\|_2^2 +
  \lambda \big\|w[\lambda]\big\|_1
  \leq \frac{1}{2} \|  y \|_2^2  \leq \frac{1}{2} ~ ,
\end{align*}
which implies that $\|\bX w[\lambda] - y \|_2 \leq 1$. From necessary
conditions for optimality we also get,
\begin{align*}
  \bX^{\top} (\bX w[\lambda] - y ) =  u[\lambda] ~ .
\end{align*}
Thus, we have
$$\| u[\lambda] \|_2  = \left\| \bX^{\top} (\bX w[\lambda] - y ) \right\|_2
  \leq \|\bX\|_2 = O(1) ~ .
$$
Therefore, there exists a constant
$\lambda_{\textrm{max}}$ such that implies that $w[\lambda] = 0$ for
$\lambda \geq \lambda_{\textrm{max}}$.

We next employ the randomness of $\bG$. Consider a fixed $\alpha_0$ and
examine the event that there exists one set $\set{S}$ of
size at least $s$ such that $\set{E}_{\set{S}}^{(\tau)}$  is true,
then it holds that,
\begin{align*}
& \Pr\left[\left(\exists \set{S}: |\set{S}| = s, \set{E}_{\set{S}}^{(\tau)}
\text{ holds}  \right) \cap \set{D}_{\gamma} \cap \alpha(\bX) \geq \alpha_0 \right]
\\
 &\leq \sum_{\set{S}_0 \subseteq [d], |\set{S}_0| =  s}
 \Pr\left[\big(\set{E}_{\set{S}}^{(\tau)} \text{ holds}, \set{S} =
 \set{S}_0\big) \cap \set{D}_{\gamma} \cap \alpha(\bX) \geq \alpha_0\right]
\\
&\leq  \sum_{\set{S}_0 \subseteq [d], |\set{S}_0| =  s}
\Pr\left[\set{E}_{\set{S}_0}^{(\tau)} \cap \set{D}_{\gamma} \cap \alpha(\bX) \geq \alpha_0 \right]
\\
& \leq  { d \choose s}   \left( \frac{ e \tau \sqrt{n}}{\sigma \gamma }\right)^{s}  \leq \left( \frac{ e \tau d \sqrt{n}}{\sigma \gamma }\right)^{s} ~ ,
\end{align*}
where we used the definition of $\set{D}_{\gamma}$ and the fact that
$\alpha\eqdef\alpha(\bX)\geq\alpha_0$ to obtain the last inequality.
We now set
$$\tau = O\left(\left(\delta \dlam \right)^{1/s}
  \frac{\sigma \gamma}{ d \sqrt{n}} \right) ~,$$
which in turn implies that (see \eqref{eqn:taunu}),
$$\dlam^{1 - 1/s} \gsim
  \frac{\delta^{1/s} \sigma\gamma}{
    d\sqrt{n s} \left({\Lw}/{\alpha_0^2} + \tilde{\Lu}\right)}
    ~~ \Rightarrow ~~
  \dlam \gsim \left(
  \frac{\delta^{1/s} \sigma\gamma}{
    d\sqrt{n s} \left({\Lw}/{\alpha_0^2} + \tilde{\Lu}\right)}
    \right)^{\frac{s}{s-1}}
    ~.$$
and obtain
\begin{align*}&
\Pr\left[\left(\exists \set{S}: |\set{S}| = s, \set{E}_{\set{S}}^{(\tau)}
\text{ holds}  \right) \cap \set{D}_{\gamma} \cap \alpha(\bX) \geq \alpha_0 \right]
  \leq  {\delta \dlam} ~ .
\end{align*}
Since we have at most $1/{\dlam}$ many intervals, taking union bound
over all linear segments we get that
$$ \Pr\left[ \left(N(\mathcal{P})\geq  \frac{3^s}{\dlam}\right) \cap
\set{D}_{\gamma} \cap\left( \alpha(\bX) \geq \alpha_0\right) \right]  \leq \delta $$
Finally, for properly chosen $\gamma$ and $\alpha_0$ we also obtain
$\Pr[\neg\set{D}_{\gamma}  \cup \left(\alpha(\bX) < \alpha_0 \right)] \leq 2\delta$
which completes the proof. \qed

\medskip

To recap, there exists a {\em universal} constant $c$ such that for {\em all}
$s\in[d]$ the complexity of the Lasso path is bounded above by,
\begin{align*}
  c\,3^s \left(
      \frac{\sqrt{sn} d \left(\frac{ \Lw}{\alpha^2} + \Lu \right)}
           {\delta^2 \sigma \ds}
    \right)^\frac{s}{s - 1} ~.
\end{align*}
We now use the bounds on $\Lw$, $\Lu$, and $\alpha$, yielding,
\begin{align*}
  \frac{\Lw}{\alpha^2} + \Lu \lsim \frac {d^{4.5}}{\delta^4 \sigma^4} ~ ~ , ~ ~
  \ds \lsim \frac{\sigma\delta^{\frac1s}}{\sqrt{dn}} ~ ~ ~ \Rightarrow ~ ~ ~
|\mathcal{P}| \lsim 3^s
  \left(
    \frac{\sqrt{s}\,n\,d^6}{\delta^{6+\frac1s}\,\,\sigma^6}
  \right)^\frac{s}{s-1} ~.
\end{align*}
By choosing
$s = O\left(\log\left(\frac{n d}{\delta \sigma}\right)\right)$ we get that
\begin{align*}
  |\mathcal{P}| \lsim n^{1.1} \left(\frac{d}{\delta\sigma}\right)^6 ~ .
\end{align*}

\begin{figure}[!ht]
 \begin{minipage}{0.475\linewidth}
    \centerline{\includegraphics[width=1.15\textwidth]{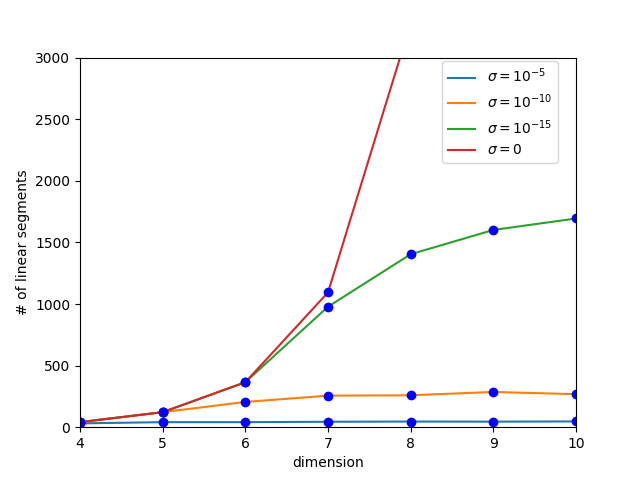}}
    \caption{\small Path complexity as a function of dimension for different levels
      of smoothing. The theoretical non-smoothed ($\sigma=0$) complexity is
      exponential in the size of the problem.}
 \end{minipage}
  \hfill
 \begin{minipage}{0.475\linewidth}
    \centerline{\includegraphics[width=1.15\textwidth]{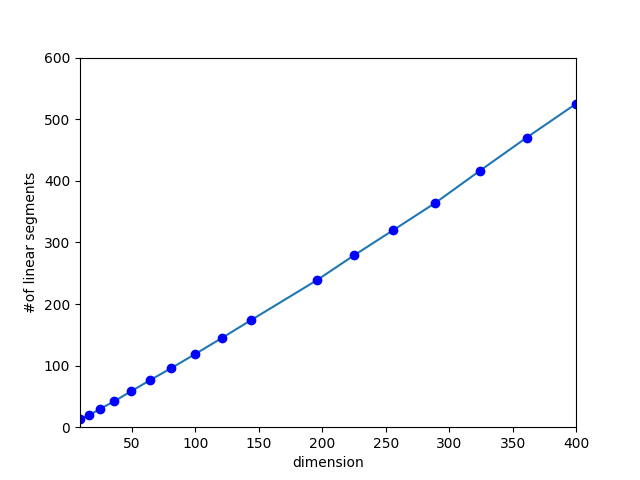}}
    \vspace{0.30cm}
    \caption{\small Path complexity in a regression task that predicts the value of a
      pixel from its neighboring pixels using the $\bold{MNIST}$ dataset.}
 \end{minipage}
\end{figure}

\section{Experiments}\label{sec:exp}
We performed two sets of experiments. Our path following implementation uses
Python with $\textbf{Float128}$ for high accuracy computations. In the first
set of experiments, we start with the exponential complexity construction
for $\bhX \in \mathbb{R}^{d \times d}$ from~\cite{mairal2012complexity},
which has $(3^d + 1)/2$ many line segments. We artificially added to each
entry of $\bhX$ i.i.d. Gaussian noise of mean zero and variance $\sigma^2$.
In this setting, the largest value of the entries of $\bhX$ is $1$ and $y$
is an all-one vector. We show the effect of dimension $d$ and smoothing
$\sigma$ on ${N}(\mathcal{P})$. We report the average over $100$ random
choices for smoothing per $\bhX$. As can be seen from the figure below, even
for a tiny amount of entry-wise noisy of $10^{-10}$, the number of linear
segments dramatically shrinks. We also include a full table of results,
where $1/$SNR denotes $-\log_{10}(\sigma)$.

\begin{table}[ht]
\begin{center}
{\large
\resizebox{0.48\textwidth}{!}{%
    \begin{tabular}{|r|r|r|r|r|r|r|r|}
    \hline {$1/$SNR} &
      $d = 4$ & $d = 5$ & $d = 6$ & $d = 7$ & $d = 8$ & $d = 9$ & $d   = 10$ \\
    \hline ${0}$ & 6& 8 & 10 & 12 & 13 & 15 & 18 \\
    \hline ${1}$ &  8& 9 & 10 &13& 16 & 14 & 17 \\
    \hline ${2}$ & 10& 13 & 13 & 15& 17 & 21 & 24\\
    \hline ${3}$ & 16& 18 & 18 & 20 & 25 &24 & 26\\
    \hline ${4}$ & 21& 26 & 27 & 27 & 30 &33 & 33 \\
    \hline ${5}$ & 31& 41&41 &44 & 46 &45  & 47\\
    \hline ${6}$ & 36& 50 & 55 & 58 & 59 &66 & 67\\
    \hline ${7}$ & 41 & 71 & 81 & 91 & 89 & 91 & 98\\
    \hline ${8}$ &  41 & 91 & 118 & 136 & 133 &134 & 138 \\
    \hline ${9}$ &  41 & 110 & 148 & 181 & 184 & 189 & 192\\
    \hline ${10}$ &  41 & 122 & 205 & 256 & 259 & 286 & 268 \\
    \hline ${11}$ &  41 & 122 & 276 &364 & 407 & 410 & 411 \\
    \hline ${12}$ &  41 & 122 & 354 & 467 & 538 & 560  & 566\\
    \hline ${13}$ &  41 & 122 &365 & 642 & 704 & 795 & 831\\
    \hline ${14}$ &  41 & 122 &365 & 872 & 1088 &1141 & 1165\\
    \hline ${15}$ &  41 & 122 &365 &978 & 1404 &1601& 1694 \\
    \hline ${16}$ &  41 & 122 &365 &1094 & 1814 &2162 & 2416\\
    \hline ${17}$ &  41 & 122 &365 &1094 & 2478 &3046 & 3343 \\
    \hline ${18}$ &  41 & 122 &365 &1094 &3030 &3894 & 4345\\
    \hline ${19}$ &  41 & 122 &365 &1094 &3281 &5323 & 6048\\
    \hline ${20}$ &  41 & 122 &365 &1094 &3281 & 7137 & 8592\\
    \hline $\infty$ &  41 & 122 &365 &1094 &3281 & 9842  & 29525 \\
    \hline
    \end{tabular}
}
}
\end{center}
\caption{Path complexity in a worst-case synthetic setting. The
theoretical non-smoothed complexity is exponential in the size of
the problem.}
\end{table}

\medskip

For the next experiment we use the $\bold{MNIST}$ data set. We randomly
selected $n = 1000$ images from the data set. We constructed the data matrix
$\bX \in \mathbb{R}^{n \times d^2}$ such that the $i$'th row of $\bX$ is a
randomly chosen patch from the $i$'th image of size $d \times d$.  We cast the
center pixel of patch $i$ as the target $y_i$ and discard the pixel from
$\bX$. Thus, the regression task amounts to predicting the center pixel $y_i$
using its surrounding pixels $\bX^{(i)}$. We plot the relation between the
size of the patch and the path complexity $N(\mathcal{P})$. Each point in the
graph is the average over $100$ random samples of patches and images. As
can be seen, when the amount of noise in the data is fixed and governed by the
data acquisition process (the minimum pixel value is $0$ and maximum is $255$
for MNIST), the number of linear segments increases barely faster than
\emph{linearly} in the dimension.

\section{Conclusions}
We proved that the smoothed complexity of the Lasso's regularization path is
pragmatically polynomial in the input size. Our analysis contrasts worst case
settings for which the Lasso's complexity is known to be exponential. To
illustrate the key idea, we provided analysis when smoothing each entry in the
data matrix by adding small amount of Gaussian noise. Although not presented
here, our analysis carries over to settings in which the smoothing is
performed using other distributions which can be sub-Gaussian,
sub-exponential, or even non i.i.d. so long as the rows of $\bG$ are
statistically independent.

\medskip

The nature of smoothed analysis usually imposes a
large polynomial factor~\cite{spielman2001smoothed}, as we do not make any
additional assumptions on the hidden matrix $\bhX$. However, we believe that
the polynomial degree in our results can be further reduced. For example, in
our proof, we used the fact that the condition number of $\bX$ is $\sim
\frac{1}{d}$, which is close to being ill-conditioned. For well-conditioned
matrices the polynomial bound can be improved to $O(d^{2.1} n^{1.1})$. This
reduction is also valid in settings when $n \geq 2d$ (see
e.g.~\cite{rudelson2010non} for different behaviors of the condition number of
Gaussian random matrices for $n = d$ and $n \geq 2d$).  Furthermore, when
$n\geq{}2d$, we can also improve $\ds$ to $\Omega(1)$, hence our
polynomial dependency can be further reduced to $O(d^{1.6} n^{0.6})$.  A final
improvement may stem from $\Lw$ and $\Lu$. We actually proved that
$$\left\|\frac{d w[\lambda]}{d \lambda } \right\|_2 =
  O\left( \frac{\sqrt{d}}{\alpha^2} \right) ~, $$
while we only need to use the infinity norm
$\left\|\frac{d w[\lambda]}{d \lambda } \right\|_{\infty}$. We leave these
improvements and further generalizations to future research.

\section*{Acknowledgements}
We would like to thank Vineet Gupta for thoughtful comments and feedback.

\bibliography{paper_arxiv}
\bibliographystyle{plain}

\end{document}